\documentclass{article}
\usepackage[utf8]{inputenc}
\usepackage{amsthm, amssymb,amscd,amsmath,amsfonts}
\usepackage{hyperref}
\hypersetup{colorlinks=true, citecolor=black, urlcolor=blue,linkcolor=blue}
\usepackage{color}
\usepackage{xcolor}
\usepackage{todonotes}
\usepackage{IEEEtrantools}
\usepackage{mathtools}
\usepackage{chngcntr}
\usepackage{soul} 

\usepackage{svg}

\usepackage[shortlabels]{enumitem}

\usepackage{float}


\usepackage{subcaption}
\usepackage{multirow}
\usepackage[round, authoryear]{natbib} 
\usepackage{bbm}
\usepackage{lineno} 

\newtheorem{theorem}{Theorem}
\newtheorem{lemma}[theorem]{Lemma}

\newtheorem{proposition}[theorem]{Proposition}
\newtheorem{remark}[theorem]{Remark}
\newtheorem{corollary}[theorem]{Corollary}

\newcommand\N{{\mathbb N}}
\newcommand\R{{\mathbb R}}

\renewcommand\P{{\mathbb P}}

\newcommand\E{{\mathbb E}}

\allowdisplaybreaks

\numberwithin{equation}{subsection}
\makeatletter
\@addtoreset{equation}{section}
\makeatother

\usepackage{authblk}
\title{Small-time approximation of the transition density for diffusions with singularities. Application to the Wright-Fisher model.}
\author[1,2]{T.Roa \thanks{tania.roa@uai.cl}}
\author[3]{M. Fariello \thanks{fariello@fing.edu.uy}}
\author[3,4]{G. Martínez}
\author[3,5]{J.R. Le\'on \thanks{rlramos@fing.edu.uy}}
\affil[1]{Universidad Adolfo Ibáñez. Faculty of Engineering \& Sciences. Viña del Mar, Chile}
\affil[2]{Data Observatory Foundation, ANID Technology Center No. DO210001, Eliodoro Yáñez 2990, 7510277, Providencia, Santiago, Chile}
\affil[3]{Universidad de La Rep\'ublica. IMERL. Uruguay}
\affil[4]{McGill University, Canada\thanks{gerardo.martinez@mail.mcgill.ca}}
\affil[5]{Escuela de Matemática UCV. Venezuela}
\date{\today}                     
\setcounter{Maxaffil}{0}

\begin{document}

\maketitle

\begin{abstract}
The Wright-Fisher (W-F) diffusion model serves as a foundational framework for interpreting population evolution through allele frequency dynamics over time. Despite the known transition probability between consecutive generations, an exact analytical expression for the transition density at arbitrary time intervals remains elusive. Commonly utilized distributions such as Gaussian or Beta inadequately address the fixation issue at extreme allele frequencies (0 or 1), particularly for short periods. In this study, we introduce two alternative parametric functions, namely the Asymptotic Expansion (AE) and the Gaussian approximation (GaussA), derived through probabilistic methodologies, aiming to better approximate this density.

The AE function provides a suitable density for allele frequency distributions, encompassing extreme values within the interval [0,1]. Additionally, we outline the range of validity for the GaussA approximation. While our primary focus is on W-F diffusion, we demonstrate how our findings extend to other diffusion models featuring singularities. Through simulations of allele frequencies under a W-F process and employing a recently developed adaptive density estimation method, we conduct a comparative analysis to assess the fit of the proposed densities against the Beta and Gaussian distributions.\\
\textbf{Key words:} 
diffusions with singularities,
Wright-Fisher model, 
transition density. 

\textbf{Classification:} 
\end{abstract}


\section{Introduction}


The Wright-Fisher (W-F) model, initially proposed by \cite{Wright1931}, offers a robust mathematical framework for studying the dynamics of allele frequencies within a population over successive generations. 

To begin, we introduce the discrete  W-F model alongside its diffusion approximation. The discrete W-F model, as delineated by \cite{Wright1931}, operates within a population of constant size $N$ and two allelic types $\{a,A\}$. It is represented as a Markov chain $X_n$ with state space $\{0,1,2,\ldots,N\}$, where $X_n$ denotes the count of allele $a$ at generation $n$. Transition probabilities within this model are defined by a conditional Bernoulli law:
$$\P(X_{n+1}=j\,|X_n=i)=\binom{N}{j}\left(\frac iN\right)^{j}\left(1-\frac iN\right)^{N-j},\quad j=0,\ldots,N.$$

Scaling the number of generations relative to the population size  permits approximate the model with a diffusion process.  Let $X_n^{N}$ represent the Markov chain, emphasizing the population size, with $\frac1N X^N_0\to x_0$ as $N\to\infty$. Introducing the continuous-time process $X_N(t)=\frac1NX^N_{[Nt]}, t\ge0,$ yields  $X_N(\cdot)\to X(\cdot)$ as $N$ tends to infinity, being this convergence  in distribution, as demonstrated in the book \cite{Ethier1986}. The limiting process is a Markov diffusion with state space  $[0,1]$, governed by the following Stochastic Differential Equation (SDE):
\begin{eqnarray}\label{difusion}dX(t)&=&\sqrt{X(t)(1-X(t))}dW(t)\\
\nonumber X(0)&=&x_0
\end{eqnarray}

Here, $W(t)$ denotes a standard Wiener process.

The infinitesimal generator of this diffusion process is defined as:
$$\mathcal L(f)(x)=\frac12(x(1-x))\frac{d^2f}{dx^2}(x),$$ where $f$ is a function twice continuously differentiable and defined on  $(0,1)$ with appropriate boundary conditions.

Let us introduce the main focus of our work. Since no simple analytical expression is available for the distribution of allele frequencies in a given generation, the application of the discrete W-F model in inference has relied on approximate results. 
An overview of these results can be found in \cite{Tataru2016}, which compares several approximations of the W-F distribution used in statistical inference under different assumptions.

The Gaussian distribution is commonly employed ( \citet{Bonhomme2010, Coop2010, Gautier2010, Pickrell2012, Lacerda2014, Terhorst2015}), albeit with limitations due to the discrete nature of the model and the appearance of fixation or loss events at the extremes of the frequency spectrum. Most of the authors cite \citet{Nicholson2002} to justify the use of this distribution, but no proof is given in that article. This approximation is not always appropriate due to (i) the discrete nature of the model, (ii) the range of frequencies, which is restricted to the interval $[0,1]$, and (iii) the appearance of atoms at 0 and 1 in the allele frequency, corresponding to fixation and loss of an allele, respectively. As the number of generations increases, given a constant population size, the atoms have an increasing probability of remaining fixed at 0 or 1. One possible solution, explored in the work of \cite{Tataru2015}, is to use a Beta distribution, Beta with spikes at the extremes of the interval $[0,1]$, or a truncated Gaussian distribution. The most promising proposal in terms of goodness of fit of the allele frequency distribution is to consider the diffusion model \eqref{difusion}, but since its transition density  does not have a closed analytical expression, it is not possible to perform statistical inference. These drawbacks make finding analytical expressions and/or valid ranges of the Gaussian distribution very important for  statistical inference.



In this paper we propose a more general framework that allows obtaining a finite definite density function on the interval $[0,1]$. Our methodology, inspired by seminal works in the field, provides an asymptotic expansion of the probability density function solution for the W-F diffusion model, particularly effective when $\frac n{2N}$ is small. To find such a density, we generalize to a singular diffusion the technique developed in the work of \cite{DacunhaCastelle1986} and \cite{Voronka1975}. Other works that complement the scope of the cited work of Voronka and Keller are those of \cite{antonelli1978}, \cite{tier1978}, and \cite{tier1981}, where the Ohta–Kimura model with two-locus di-allelic systems with linkage are studied, for example.

As well as \cite{Voronka1975}, our method provides an asymptotic expansion of the solution of the probability density function of the forward Kolmogorov equation for the W-F diffusion model that is valid when $t=\frac n{2N}$ is small (where $n$ is the number of generations and $N$ is the constant population size). It is important to point out that our method is probabilistic in essence, instead of asymptotic methods in partial differential equations as is the case of the aforementioned work. 

While our primary focus remains on the W-F diffusion, we present a generalized solution applicable to diffusions with singularities parameterized by 
two parameters, $a$ and $b$. When these parameters are equal to $\frac{1}{2}$, we obtain W-F diffusion. Thus, we show how to develop a general procedure to find the transition density of a diffusion that is state dependent and singular.

To evaluate the efficacy of our proposed density and compare it with existing methods, we conduct simulations of discrete W-F trajectories.  For a fixed time $t, \, t \in [0,1]$, we use an adaptive estimation method to obtain a continuous empirical density from the simulated data (see, e.g., \citet{Bertin2011}). Having a continuous density from the simulated data allows us to better evaluate the fit of different continuous densities to the data using the Hellinger distance and the $\mathbb L^{2}$ norm. 

Our work is organized into various sections. In Section \ref{section: preliminaries}, we introduce our notation, the underlying model that we are investigating, the process of transforming  state dependent diffusion, significant findings in discovering transition densities of diffusion processes, and the lemma that provides the transition density. We provide a comprehensive proof of this lemma in Section \ref{proof-lemma}. For Section \ref{section: approximation}, we present significant outcomes associated with the W-F model, as well as corresponding proofs and special scenarios, such as neutral, with mutation and selection. Proofs for these scenarios can be found in Section \ref{apend}. Additionally, in Section \ref{section: model evaluation}, we assess the suitability of the defined models and two other models commonly used: the Beta and  Gaussian distribution.

\section{Preliminaries}\label{section: preliminaries}

In this section we begin by defining a stochastic differential equation (SDE) that has a drift component, $\mu(\cdot)$, and a local variance, $\sigma(\cdot)$, depending on $X(t)$. We then define the adaptation of Lemma 2, presented by \cite{DacunhaCastelle1986} paper, so that the transition density can be studied. In this lemma, both the drift term and the local variance term require certain regularity of these functions. Using the Lamperti transform (see, for example, \cite{moller2010}), we study a transformed process and the conditions for the drift term of this process to fulfill the conditions of the proposed lemma.

We will find in subsection \ref{principal} (proposition \ref{proposition asumption}) an exact formula for the transition density of the following diffusion model
\begin{equation} \label{sigma-gen}
dX(t) = X(t)^{a} (1 - X(t))^{b} dW(t); \qquad a, b, X(t) \in (0,1),
\end{equation}
where $W$ is a standard Wiener process. The W-F diffusion is a particular case of this model,  when  $a=b=\frac{1}{2}$. 

As  local variance term $\sigma(x)=x^a(1-x)^b$ of this SDE is not constant, so it is necessary to transform it.

%

\subsection{Local variance transformation for a SDE}
Through the Lamperti transformation we can convert a SDE, with drift and local variance term dependent on $t$, into a SDE with a constant local variance term. \\

Let us consider the following stochastic differential equation, 
\begin{equation} \label{general}
dX(t) =b (X(t)) dt + \sigma (X(t)) dW(t),
\end{equation}
where its drift $b(\cdot)$ component is s smooth function, and the local varainace $\sigma(\cdot)>0$ is twice differentiable in $[0,1]$, the continuity of this function implies that it is lower bounded by a constant $c>0$, and $W$ ia a standard Wiener process.\\

Let $Y(t) = F(X(t))$ be a transformation of $X(t)$, where $F(\cdot)$ is  a twice differentiable function; Itô lemma yields
\begin{equation*}
    dY(t)= \left[ F'(X(t))b(X(t)) +\frac{1}{2} F''(X(t))\sigma^2(X(t))\right] dt+F'(X(t))\sigma(X(t)) \, dW(t).
\end{equation*}

Defining  $F(\cdot)$ as 
\begin{align} \label{eq: definition of g}
& F(x) = \int_{0}^{x} \frac{1}{\sigma(u)} du, \quad \text{ we have,}\nonumber \\
& F'(x) =  \frac{1}{\sigma(x)}, \text{and,} \\
& F''(x) =  -\frac{\sigma'(x)}{\sigma^{2}(x)}. \nonumber
\end{align}
Then the result for the transformed process is 

\begin{equation}\label{eq: transformed process 1}
    dY(t)= \left [ F'(X(t))b(X(t))+\frac{1}{2} F''(X(t)) \sigma^2(X(t)) \right ]dt + dW(t). 
\end{equation}
The process $Y(t)$ is now the solution of a SDE with a constant local variance term equal to one. With the additional hypothesis that $F(\cdot)$ has an inverse, i.e $x =  F^{-1}(y)$, we can re-write the equation \eqref{eq: transformed process 1} as
\begin{multline}\label{eq: transformed process 2}
    dY(t)= \left [ F'(F^{-1}(Y(t)))b(F^{-1}(Y(t)))\right.\\\left .+\frac{1}{2} F''(F^{-1}(Y(t))) \sigma^2(F^{-1}(Y(t))) \right]dt + dW(t). 
\end{multline}

Now, we have a SDE, resulting from the Lamperti transformation, with constant local variance equal to one and drift term equal to
$$\mu(y)=F'(F^{-1}(y))b(F^{-1}(y))+\frac{1}{2} F''(F^{-1}(y)) \sigma^2(F^{-1}(y)).$$

In the particular case when $b=0$ we obtain
\begin{equation} \label{sde-transformed}
dY(t) = - \dfrac{1}{2}\sigma'(F^{-1}(Y(t))) dt + dW(t).
\end{equation}

\subsection{Transition density for a diffusion with constant local variance term}
The above transformation allows us to work, without loss of generality, with a diffusion that has  a constant term of local variance. Now based on Lemma 2, of \cite{DacunhaCastelle1986}, we present the following lemma that allows obtaining a exact formula for the transition density.

\begin{lemma}\label{lemma: transition density}[Transition density]
Let us consider the following stochastic differential equation
\begin{equation} \label{general-cte}
dX(t) = \mu(X(t)) dt + dW(t)
\end{equation}
where $W(t)$ is the standard Wiener process and $\mu(\cdot)$ is a smooth drift function and verifies 
\begin{equation}\label{eq: moment}
    \mu^2(x)+\mu'(x)=O(|x|^2)\mbox{ when } |x|\to\infty.
\end{equation}

Let $p_t$ be the transition density of the process defined in \eqref{general-cte}, and $\{B(t)\}_{t \geq 0}$ be a standard Brownian bridge and let us denote $M(t) = \int_0^t \mu(s) \, ds$. Then, $p_t$ can be written as
\begin{multline}\label{eq: lemma transition density}
    p_t(x,y) =  \frac{1}{\sqrt{2\pi t}} \exp \left [- \frac{(x-y)^2}{2t} + (M(y) - M(x))\right ] \\
     \times \, \E \exp \left [-\frac t2\int_0^1 \nu(\sqrt tB(s)+(1-s)x+sy)ds  \right ],
\end{multline}
where $\nu(x) = (\mu^2(x) + \mu'(x))$.
\end{lemma}

A detailed proof of Lemma \ref{lemma: transition density} can be found in the following section. 

Furthermore, using the above lemma we can obtain an useful approximation for the transition density when $t\to0$.\\


\begin{lemma} \label{expec-td}
For $t \to 0$ and $y \neq x$, we have
\begin{equation*}
\lim_{t \to 0} \frac{1}{t} \E \exp \left [-\frac t2\int_0^1 \nu(\sqrt tB(s)+(1-s)x+sy)ds  \right ] = -\frac{1}{2(y-x)}\int_x^y \nu(u) \, du
\end{equation*}

\end{lemma}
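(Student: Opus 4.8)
The plan is to read the stated limit as the first-order coefficient in the small-time expansion of the bridge expectation: since the exponent carries a factor $t$ and $\sqrt{t}\,B(s)\to0$, the bare expectation tends to $1$ as $t\to0$, so the meaningful quantity is $\lim_{t\to0}\tfrac1t\bigl(\E[\,\cdot\,]-1\bigr)$, and it is this that equals the right-hand side. Set $I_t=\int_0^1\nu(\sqrt{t}\,B(s)+(1-s)x+sy)\,ds$ and $G(t)=\E\exp(-\tfrac t2 I_t)$, so that $G(0)=1$ and the target is $\lim_{t\to0}\tfrac1t(G(t)-1)$.

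First I would pin down the pointwise (almost sure) limit of the difference quotient inside the expectation. Writing $z_t=-\tfrac t2 I_t$, the integrand is $e^{z_t}$ and
\begin{equation*}
\frac{e^{z_t}-1}{t}=-\frac12\,I_t\cdot\frac{e^{z_t}-1}{z_t}.
\end{equation*}
Because $\mu$ is smooth, $\nu=\mu^2+\mu'$ is continuous, and since the Brownian bridge has a.s.\ continuous paths we have $\sqrt{t}\,B(s)\to0$ uniformly in $s\in[0,1]$ almost surely; hence $I_t\to I_0:=\int_0^1\nu((1-s)x+sy)\,ds$ and $z_t\to0$ almost surely. Using $\tfrac{e^z-1}{z}\to1$ as $z\to0$, the difference quotient converges a.s.\ to $-\tfrac12 I_0$. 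The substitution $u=(1-s)x+sy$, $du=(y-x)\,ds$ (which is where $y\neq x$ enters) gives $I_0=\tfrac1{y-x}\int_x^y\nu(u)\,du$, already producing the claimed constant.

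Next I would promote this to convergence of expectations via dominated convergence. The bound $\bigl|\tfrac{e^{z_t}-1}{z_t}\bigr|\le e^{|z_t|}$ yields
\begin{equation*}
\left|\frac{e^{z_t}-1}{t}\right|\le \frac12\,|I_t|\,e^{\frac t2|I_t|}.
\end{equation*}
By the growth hypothesis \eqref{eq: moment}, $|\nu(w)|\le C(1+|w|^2)$, so $|I_t|\le C'\bigl(1+x^2+y^2+t\sup_{s}B(s)^2\bigr)$. Since $\sup_{s\in[0,1]}|B(s)|$ for a Brownian bridge is sub-Gaussian, $\sup_s B(s)^2$ has exponential moments up to a positive threshold; consequently $e^{\frac t2|I_t|}\le e^{\frac{C' t^2}{2}\sup_s B(s)^2}\cdot(\text{bounded factor})$ is integrable for all sufficiently small $t$, and $\tfrac12|I_t|\,e^{\frac t2|I_t|}$ is dominated by a fixed integrable random variable on a neighbourhood of $t=0$. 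Dominated convergence then gives $\tfrac1t(G(t)-1)\to-\tfrac12 I_0=-\tfrac1{2(y-x)}\int_x^y\nu(u)\,du$.

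The main obstacle is precisely this domination step: because the $\mu'$ contribution can make $\nu$ negative along parts of the path, $z_t=-\tfrac t2 I_t$ may be positive and the exponential can grow, so one must genuinely verify the exponential moment $\E\,e^{\frac t2|I_t|}<\infty$ for small $t$. This is exactly the point at which the quadratic growth control \eqref{eq: moment} and the sub-Gaussian tail of the bridge supremum are used in tandem; the pointwise limit, the factorization through $\tfrac{e^z-1}{z}$, and the change of variables are all routine by comparison.
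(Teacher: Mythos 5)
Your proof is correct, and at its core it follows the same route as the paper's: read the lemma as computing $\lim_{t\to 0}\frac1t\bigl(\E[\cdot]-1\bigr)$ (the displayed statement omits the ``$-1$''; the paper's own proof and the use in Corollary \ref{AE1} confirm your reading), take the pointwise limit via $e^{z}-1\sim z$ together with continuity of $\nu$ and a.s.\ uniform smallness of $\sqrt t\,B(s)$, and conclude with the substitution $u=(1-s)x+sy$, which is precisely where $y\neq x$ enters. The one genuine difference is rigor at the interchange of limit and expectation: the paper's proof simply writes $\exp(f(t))-1\approx f(t)$ and moves the $t\to0$ limit inside $\E$ without justification, whereas you dominate $\frac1t\lvert e^{z_t}-1\rvert\le\frac12\lvert I_t\rvert e^{\frac t2\lvert I_t\rvert}$ and check that this is bounded, uniformly for small $t$, by a fixed integrable random variable, using the quadratic growth $\lvert\nu(w)\rvert\le C(1+\lvert w\rvert^2)$ from \eqref{eq: moment} and the sub-Gaussian tail of $\sup_{s\in[0,1]}\lvert B(s)\rvert$ (so $e^{c\,t^2\sup_s B(s)^2}$ is integrable once $c\,t^2$ lies below the Gaussian threshold). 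This domination step is sound — the polynomial prefactor is absorbed since the exponent stays strictly below the threshold — and it is exactly what the paper leaves implicit; notably, the paper invokes \eqref{eq: moment} only for the Girsanov formula, while your argument shows it is also what makes this lemma's dominated-convergence step work. In short: same approach, with the missing analytic justification supplied correctly.
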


\begin{proof}
Using the approximation $\exp(f(t))-1 \approx f(t)$ when $f(t) \to 0$, we get that
\begin{multline*}
     \lim_{t \to 0 }\frac{1}{t} \left [\exp \left (-\frac t2\int_0^1 \nu(\sqrt tB(s)+(1-s)x+sy)ds  \right ) - 1\right ]\\ \approx \lim_{t \to 0} -\frac{1}{2}\int_0^1 \nu(\sqrt tB(s)+(1-s)x+sy)ds\\
      =  -\frac{1}{2}\int_0^1 \nu((1-s)x+sy) \, ds 
      =   -\frac{1}{2(y-x)}\int_x^y \nu(u) \, du.
\end{multline*}
Then
\begin{multline}\label{eq: approximation exponential}
    \frac{1}{t}\E \left \{\exp \left [-\frac t2\int_0^1 \nu(\sqrt tB(s)+(1-s)x+sy)ds  \right ] - 1 \right \}\\ \underset{t \to 0}{\longrightarrow}-\frac{1}{2(y-x)}\int_x^y \nu(u) \, du
\end{multline}
\end{proof}

\begin{corollary} \label{AE1}
 [Asymptotic expansion] The transition density, $p_{t}$, can be written as follows
\begin{multline}\label{eq: asymptotic expansion}
 p_t(x,y)=\frac{1}{\sqrt{2\pi t}} \exp \left [- \frac{(x-y)^2}{2t} + (M(y) - M(x))\right ]\\\times \big(1-\frac t{2(y-x)}\int_x^y \nu(u) \, du+o(t)\big).
\end{multline}
\end{corollary}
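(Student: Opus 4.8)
The plan is to read the expansion off directly from the exact representation in Lemma~\ref{lemma: transition density}, inserting the first-order behaviour of the random factor supplied by Lemma~\ref{expec-td}. Write the exact formula \eqref{eq: lemma transition density} as $p_t(x,y) = C_t(x,y)\,E(t)$, where
\[
C_t(x,y) = \frac{1}{\sqrt{2\pi t}}\exp\left[-\frac{(x-y)^2}{2t}+(M(y)-M(x))\right]
\]
is the explicit deterministic prefactor and
\[
E(t) = \E\exp\left[-\frac{t}{2}\int_0^1 \nu(\sqrt{t}B(s)+(1-s)x+sy)\,ds\right]
\]
is the Brownian-bridge expectation. Since $C_t$ is already in closed form, the entire content of the corollary lies in expanding $E(t)$ to first order in $t$.

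The key step is to interpret Lemma~\ref{expec-td} through the computation actually carried out in its proof, namely \eqref{eq: approximation exponential}, which yields the limit of $t^{-1}(E(t)-1)$:
\[
\lim_{t\to 0}\frac{E(t)-1}{t} = -\frac{1}{2(y-x)}\int_x^y \nu(u)\,du.
\]
By the definition of the little-$o$ symbol, the existence of this limit is exactly equivalent to the first-order expansion
\[
E(t) = 1 - \frac{t}{2(y-x)}\int_x^y \nu(u)\,du + o(t), \qquad t\to 0.
\]
Substituting this into $p_t(x,y) = C_t(x,y)\,E(t)$ and keeping the remainder inside the parentheses gives precisely \eqref{eq: asymptotic expansion}, so the corollary follows by a single algebraic substitution.

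The only point demanding care — and the step I would flag as the genuine obstacle — is the bookkeeping around Lemma~\ref{expec-td}, whose displayed statement omits the subtracted $1$ and must therefore be read in the corrected form produced by its proof. One has to be sure that the limit there is a bona fide first-order asymptotic rather than the coarser statement $E(t)\to 1$ (which would only recover the leading factor): this is exactly what the passage from $\exp(f)-1\approx f$ to the limit \eqref{eq: approximation exponential} is meant to secure, via dominated convergence for the Brownian-bridge expectation. Once that refinement is granted, no further analysis is needed, because the $o(t)$ term is retained inside the parentheses and the deterministic factor $C_t(x,y)$ multiplies through it unchanged for each fixed triple $(x,y,t)$.
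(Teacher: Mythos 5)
Your proof is correct and takes essentially the same route as the paper: the corollary is meant to follow immediately by inserting the first-order expansion of the Brownian-bridge expectation, supplied by Lemma~\ref{expec-td}, into the exact representation of Lemma~\ref{lemma: transition density}, which is precisely your decomposition $p_t(x,y)=C_t(x,y)\,E(t)$ with $E(t)=1-\frac{t}{2(y-x)}\int_x^y\nu(u)\,du+o(t)$. You are also right that the displayed statement of Lemma~\ref{expec-td} omits the subtracted $1$ and must be read in the corrected form actually proved in \eqref{eq: approximation exponential}; with that reading, your substitution argument coincides with the paper's intended proof.
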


Above, we have used the Landau's symbols $o(\cdot)$ and $O(\cdot)$, we say the $d(t)=o(t)$ or $O(t))$, if $\lim_{t\to0}\frac{d(t)}t=0$ or a constant, respectively. \\

The equation defined in \eqref{eq: asymptotic expansion} gives us an approximation to the transition density that is valid when $t$ approaches 0. For this reason, we will refer to this approximation as the \textbf{Asymptotic Expansion (AE)}. \\

Given that we are going to work with a transformed process $Y(t)$, it is necessary to establish a link between the results for the original process, $X(t)$ and the transformed one. The following lemma is presented in order to state the relation between their two transition densities.

\begin{lemma}\label{lemma: density transformed process}
Let $X(t)$ be a real-valued Markov process with transition density $p_t^{X}$ and $F$  an increasing and differentiable function. Let us define the process $Y(t) = F(X(t))$ with transition density $p_t^Y$. Then, for any real numbers $x_0 <x$, the transition density of $X(t)$ can be written as
\begin{equation}\label{eq: density transformed process}
    p_t^X(x_0, x) = p_t^Y(F(x_0), F(x))F'(x).
\end{equation}
\begin{proof}
Let $c_{1}, c_{2}$ be real numbers such that $c_{1} < c_{2}$. Then,
\begin{IEEEeqnarray*}{rCl}
    \P^{x_0}(X(t)\in [c_{1}, c_{2}])=\P^{y_0}(Y(t)\in[F(c_{1}),F(c_{2})])&=&\int_{F(c_{1})}^{F(c_{2})}p^Y_t(y_0,y)dy \\
    &=&\int_{c_{1}}^{c_{2}} p^Y_t(F(x_0),F(x))F'(x)dx,
\end{IEEEeqnarray*}
where $y_0=F(x_0)$ and $y=F(x)$.
By the definition of transition density, we conclude that \eqref{eq: density transformed process} holds.
\end{proof}
\end{lemma}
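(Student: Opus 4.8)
The plan is to prove Lemma~\ref{lemma: density transformed process} as a direct application of the change-of-variables formula for probability densities, with monotonicity of $F$ doing the essential work. First I would fix arbitrary real numbers $c_1 < c_2$ and consider the probability that the process $X(t)$, started from $x_0$, lands in the interval $[c_1, c_2]$. Because $F$ is increasing, it is a bijection onto its image that preserves the order of endpoints, so the event $\{X(t) \in [c_1, c_2]\}$ coincides exactly with the event $\{Y(t) \in [F(c_1), F(c_2)]\}$. This identification of events lets me write $\P^{x_0}(X(t) \in [c_1, c_2]) = \P^{y_0}(Y(t) \in [F(c_1), F(c_2)])$, where $y_0 = F(x_0)$.

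Next I would express the right-hand side through the transition density of $Y$, obtaining $\int_{F(c_1)}^{F(c_2)} p_t^Y(y_0, y)\, dy$, and then perform the substitution $y = F(x)$, $dy = F'(x)\, dx$. Since $F$ is increasing and differentiable, the substitution is legitimate, the limits transform as $F(c_1) \to c_1$ and $F(c_2) \to c_2$, and no sign issues arise; the integral becomes $\int_{c_1}^{c_2} p_t^Y(F(x_0), F(x))\, F'(x)\, dx$. At the same time, by the definition of the transition density of $X$, the same probability equals $\int_{c_1}^{c_2} p_t^X(x_0, x)\, dx$.

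Finally, since $c_1 < c_2$ were arbitrary, the two integral expressions agree over every interval, so by the uniqueness (up to a null set) of the density characterizing the law of $X(t)$, the integrands must coincide, yielding $p_t^X(x_0, x) = p_t^Y(F(x_0), F(x))\, F'(x)$, as claimed.

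The step requiring the most care is the order-preserving translation of events, which is precisely where the hypothesis that $F$ is \emph{increasing}, rather than merely invertible, is used: without it the endpoints would swap and the substitution would introduce a spurious sign. Differentiability enters only to make the substitution valid and to produce $F'$ as the Jacobian factor. I do not anticipate any genuine obstacle beyond this bookkeeping, since the argument is the standard density-transformation computation; the one conceptual subtlety is invoking uniqueness of the transition density to pass from equality of integrals over all intervals to equality of the integrands.
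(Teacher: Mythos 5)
Your proposal is correct and follows essentially the same route as the paper's own proof: translating the event $\{X(t)\in[c_1,c_2]\}$ into $\{Y(t)\in[F(c_1),F(c_2)]\}$ via monotonicity, substituting $y=F(x)$ in the integral against $p_t^Y$, and identifying integrands by uniqueness of the density. Your write-up is in fact slightly more careful than the paper's, since it makes explicit both the role of the increasing hypothesis and the uniqueness argument that the paper leaves implicit in ``by the definition of transition density.''
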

This lemma implies that if we have an exact or approximate analytical expression for the transition density of $Y(t)$, we also have one for $X(t)$. 
%

\section{Proof for transition density lemma \ref{lemma: transition density}} \label{proof-lemma}

To study the transition density defined in \eqref{eq: lemma transition density}, we further analyze the terms involved in it. Starting with the associated semigroup and how it is related to the drift function, continuing with the associated drift function and finally and analysis of the function $\sigma(\cdot)$ considered. \\

For ease of notation let us write the transition density of $Y(t)$ as 
\begin{multline}\label{eq: density of y}
    p_t^Y(y_0,y) =  q_t(y_0,y) \exp \left (M(y_{0}) - M(y)\right ) \\
     \times \, \E \exp \left [-\frac t2\int_0^1 \nu(\sqrt tB(s)+(1-s)x+sy)ds  \right ],
\end{multline}
where \begin{equation*}
    q_t(y_0,y)=\frac{1}{\sqrt{2\pi t}}\exp \left [-\frac{(y_0-y)^2}{2t} \right ] \text{ and }M(y) = \int_0^y - \dfrac{1}{2}\sigma'(F^{-1}(u)) du \, dy.
\end{equation*}

By equation \eqref{eq: density transformed process}, we obtain
\begin{multline}\label{eq: density of y 2}
    p_t^X(x_0,x) =  q_t(F(x_0)),F(x)) \exp \left [M(F(x_0)) - M(F(x))\right ] F'(x)\\
     \times \, \E^{F(x_0)} \exp \left [-\frac t2\int_0^1 \nu(\sqrt tB(s)+(1-s)x+sy)ds  \right ].
\end{multline}

\subsection{Semigroup associated}
The semigroup associated to the Markov process $X$ defined in \eqref{general-cte} is given by
\begin{equation}\label{eq: markov semigroup}
    P_tf(x)=\E^x[f(X(t))]=\int_Ip_t(x,y)f(y)dy,
\end{equation}
where $f(\cdot)$ is a smooth function with compact support $I\subset [0,1]$ and $\E^x$ is the expectation conditioning on $X = x$. The semigroup in \eqref{eq: markov semigroup} can be written using the Cameron-Martin-Girsanov formula (see \cite{Oksendal2000} pp. 123) (this formula holds under the condition $\mu^2(x)=O(|x|^2)$ when $|x|\to\infty$) as 
\begin{IEEEeqnarray}{rCl}
 P_tf(x) & = & \E^x\left[e^{\int_0^t  \mu(W(s))dW(s)-\frac{1}{2}\int_0^t \mu^2(W(s))ds}f(W(t))\right]\label{semigroup2} 
 \end{IEEEeqnarray}
 
Which, when it is applied to \eqref{sde-transformed}, gives
\begin{equation} \label{girsanov1}
\E^{y_0}[f(Y(t))]=\E^{y_0}[e^{-\frac12\int_0^t\sigma'(F^{-1}(W(s)))dW(s)-\frac14\int_0^t(\sigma'(F^{-1}(W(s)))^2ds} f(W(t))].
\end{equation}
 
Now, we analyze the drift function involved.
 
\subsection{Drift function}
In our case, the drift function is defined by $\mu (z) =-\frac{1}{2}\sigma'(F^{-1}(z))$. Considering that $M(z) = \int_{0}^{z} \mu(s) ds$ and $F(\cdot)$ is an increasing function, we have
\begin{align*}
M(z) &= - \dfrac{1}{2} \int_{F^{-1}(0)}^{{F^{-1}(z)}} \dfrac{\sigma'(x)}{\sigma(x)} dx = - \dfrac{1}{2} \left( \log \sigma (F^{-1} (z)) - \log \sigma (F^{-1} (0)) \right), \\
M'(z) &=-\frac12\sigma'(F^{-1}(z)),\\
M''(z) &=-\frac12\sigma''(F^{-1}(z))\sigma(F^{-1}(z)),
\end{align*}

Considering these results and by means of Itô lemma,
\begin{align*}
& M(W(t))-M(W(0))=-\frac12\int_0^t\sigma'(F^{-1}(W(s)))dW(s) - \frac12\int_0^t\sigma''(F^{-1}(W(s)))\sigma(F^{-1}(W(s)))ds 
\end{align*}
Implying that, 
\begin{align*}
& M(W(t))-M(W(0))+\frac12\int_0^t\sigma''(F^{-1}(W(s)))\sigma(F^{-1}(W(s)))ds=-\frac12\int_0^t\sigma'(F^{-1}(W(s)))dW(s) 
\end{align*}

Replacing this result in \eqref{girsanov1}, it results a formula without terms involving stochastic integrals
\begin{equation}\label{girsanov2}
\E^{y_0}[f(Y(t))]=e^{-M(W(0))}\E^{y_0}[e^{\frac12\int_0^t\sigma''(F^{-1}(W(s)))\sigma(F^{-1}(W(s)))ds-\frac14\int_0^t(\sigma'(F^{-1}(W(t)))^2ds}e^{M(W(t))}f(W(t))].
\end{equation}

\begin{remark} It is important to point out that all the results obtained above can be obtained without difficulty from the paper of \cite{DacunhaCastelle1986}.
\end{remark}

\subsection{Analysis of $\sigma(\cdot)$}\label{principal}
This subsection aims to extend the precedent results to more general local variance terms. First let us assume that $\sigma: [0,1] \to \mathbb{R}^{+}$ is concave and let us define
$$
V(x)=\frac12|\sigma''(F^{-1}(x))|\sigma(F^{-1}(x)))+\frac14(\sigma'(F^{-1}(x))^2\ge0.
$$

The above result enables us to write,
\begin{multline*}\E^{y_0}[f(Y(t))]=\int_{F(0)}^{F(1)} p^Y_t(y_0,y)f(y)dy\\=
\int_{F(0)}^{F(1)} \E[e^{-\frac12\int_0^tV(y_0+W(s))ds}\,|W(t)=y-y_0]q_t(y_0-y)e^{M(y)-M(y_0)}f(y)dy
\\=\int_{F(0)}^{F(1)} \E[e^{-\frac12\int_0^tV(y_0+(W(s)-\frac stW(t))+\frac stW(t))ds}\,|W(t)=y-y_0]q_t(y_0-y)e^{M(y)-M(y_0)}f(y)dy
\\=\int_{F(0)}^{F(1)} \E[e^{-\frac12\int_0^tV((1-\frac st)y_0+\frac st y+(W(s)-\frac stW(t))ds}]q_t(y_0-y)e^{M(y)-M(y_0)}f(y)dy
\\=\int_{F(0)}^{F(1)}\E[e^{-\frac t2\int_0^1V((1-u)y_0+u y+\sqrt t(W(u)-uW(1))du}]q_t(y_0-y)e^{M(y)-M(y_0)}f(y)dy.\end{multline*}

By duality, it results that for $F(0)\le y_0,y\le F(1)$, yields,

$$ p^Y_t(y_0,y)=\E[e^{-\frac t2\int_0^1V((1-u)y_0+u y+\sqrt t(W(u)-uW(1))du}]q_t(y_0-y)e^{M(y)-M(y_0)}.$$

In addition, for  $y_0=F(x_0)$
$$\P^{x_0}(X(t)\in[a,b])=\P^{y_0}(F(X(t))\in[F(a),F(b)])=\int_{F(a)}^{F(b)}p^Y_t(y_0,y)dy=\int_{a}^{b}p^Y_t(F(x_0),F(x))\frac1{\sigma(x)}dx.$$

This relationship means that
\begin{equation}\label{transformacion}
p^X_t(x_0,x)=p^Y_t(F(x_0),F(x))\frac1{\sigma(x)}.
\end{equation}

Below we extend the class of functions $\sigma(\cdot)$ for which the above result holds, based on the following assumptions

\begin{itemize}
\item[\textbf{(A1)}] Let us consider that $\sigma:[0,1]\to\R^{+}$, is concave, and it tends to zero at $0$ and $1$, such that $\int_0^1\frac1{\sigma(u)}du<\infty,$. 

\item[\textbf{(A2)}] We can approximate $\sigma$ and its two derivatives by a sequence of functions with two continuous derivatives such that $\sigma''_n\le0$, $\sigma_n\to\sigma$, as well as its derivatives
\end{itemize}

These assumptions allow us to approximate the expression \eqref{girsanov2} by a similar expression written for the sequence instead of $\sigma$; then, by the dominated convergence theorem,we can exchange limit with  expectation. \\

To clarify this point, let $V_{n}$ be defined as $V$, but replacing $\sigma_n(\cdot)$ instead of $\sigma(\cdot)$, and let us denote $p^{Y_n}_t$ the transition density. The Kolmogorov differential equation, implies that $p^{Y_n}_t(y_0,y)\to p^Y_t(y_0,y)$. \\

Additionally, we have 
\begin{multline}\label{limite}\E[e^{-\frac t2\int_0^1V_n((1-u)y_0+u y+\sqrt t B_n(u) du}]q_t(y_0-y)e^{M_n(y)-M_n(y_0)}
\\ =\E[e^{-\frac t2\int_0^1V_n((1-u)y_0+u y+\sqrt t B(u)du}]q_t(y_0-y)e^{M_n(y)-M_n(y_0)}.\end{multline}

where, by equality in distribution, we have substituted, the sequence of Brownian bridges, $B_{n}(u)$, by a fixed Brownian bridge $B(u)$. \\

Considering that $V_n$ is a positive function, we use dominated convergence theorem to show that the term \eqref{limite} converges to
$$\E[e^{-\frac t2\int_0^1V((1-u)y_0+u y+\sqrt t B(u)du}]q_t(y_0-y)e^{M(y)-M(y_0)}.$$

Thus, we obtain
$$ 
p^Y_t(y_0,y)=E[e^{-\frac t2\int_0^1V((1-u)y_0+u y+\sqrt t B(u) du}]q_t(y_0-y)e^{M(y)-M(y_0)},
$$

and also the relationship \eqref{transformacion}, holds. \\

These computations allow us to give the following  result 
\begin{proposition} \label{proposition asumption}
Based on assumptions (A1) and (A2), we get
$$ 
p^Y_t(y_0,y)=E[e^{-\frac t2\int_0^1V((1-u)y_0+u y+\sqrt t B(u) du}]q_t(y_0-y)e^{M(y)-M(y_0)},
$$
\end{proposition}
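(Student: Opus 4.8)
The plan is to establish the formula by an approximation argument that transports the already-proven smooth case to the singular limit. For $\sigma$ concave and twice continuously differentiable, the identity
\begin{equation*}
p^Y_t(y_0,y)=\E\Big[e^{-\frac t2\int_0^1 V((1-u)y_0+uy+\sqrt t\,B(u))\,du}\Big]\,q_t(y_0-y)\,e^{M(y)-M(y_0)}
\end{equation*}
has already been derived from \eqref{girsanov1}--\eqref{girsanov2} by using the Cameron–Martin–Girsanov formula and then Itô's lemma to remove the stochastic integral. Since the local variance in the Wright–Fisher setting vanishes at the endpoints and has derivatives that blow up there, this identity cannot be invoked directly; instead I would fix $t>0$ and interior endpoints $y_0,y\in(F(0),F(1))$ — the interval being bounded precisely because (A1) guarantees $\int_0^1\sigma(u)^{-1}\,du<\infty$ — and pass to the limit along the approximating sequence $\sigma_n$ of (A2), for which the identity above holds with $V_n,M_n$ in place of $V,M$.

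The argument then splits into controlling each side of the $\sigma_n$-identity. For the left-hand side, the convergence $\sigma_n\to\sigma$ together with its derivatives makes the generators of the diffusions $Y_n$ converge to that of $Y$, and invoking stability of the Kolmogorov (forward) equation under convergence of coefficients gives $p^{Y_n}_t(y_0,y)\to p^Y_t(y_0,y)$. For the right-hand side, the prefactor $q_t(y_0-y)$ does not depend on $n$, the factor $e^{M_n(y)-M_n(y_0)}$ converges since $M_n\to M$ pointwise (as $\sigma_n'\to\sigma'$ and the corresponding transforms converge), and for the expectation I would apply dominated convergence: here concavity is essential, for it forces $|\sigma_n''|=-\sigma_n''$ and hence $V_n\ge0$, so that $0\le e^{-\frac t2\int_0^1 V_n(\cdots)\,du}\le 1$ and the constant $1$ is an integrable dominating function, while $V_n\to V$ yields convergence of the exponents along almost every Brownian-bridge path. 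Matching the two limits and using uniqueness of limits then gives the stated formula, and \eqref{transformacion} transfers it back to $p^X_t$.

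The main obstacle is the left-hand-side convergence $p^{Y_n}_t\to p^Y_t$. Because the limiting drift $\mu(z)=-\tfrac12\sigma'(F^{-1}(z))$ becomes singular as $z\to F(0)^+$ and $z\to F(1)^-$, one must verify that the transition densities converge despite this boundary singularity — in particular that the approximating processes do not leak probability across the endpoints in the limit. This is exactly where (A1) does its work: finiteness of $\int_0^1\sigma(u)^{-1}\,du$ places the boundary at finite distance in the $Y$-coordinate and lets the boundary be classified, so that the Fokker–Planck problem has a well-posed limit. Making this PDE-stability step fully rigorous, rather than appealing informally to ``the Kolmogorov differential equation'', is the delicate part of the argument; the domination on the right-hand side, by contrast, is clean once $V_n\ge0$ is in hand.
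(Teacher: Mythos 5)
Your proposal is correct and follows essentially the same route as the paper: approximate the singular $\sigma$ by the smooth concave sequence $\sigma_n$ of (A2), pass to the limit on the density side via stability of the Kolmogorov equation ($p^{Y_n}_t\to p^Y_t$), and on the probabilistic side use $V_n\ge0$ (from concavity) so the integrand is bounded by $1$ and dominated convergence applies, with $M_n\to M$ handling the exponential prefactor. Your closing caveat — that the convergence $p^{Y_n}_t\to p^Y_t$ is the step needing genuine justification — is apt, since the paper likewise asserts it with only a brief appeal to the Kolmogorov differential equation.
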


\begin{remark}
We think this is a new result; we have not found a similar one in the literature.. The class of functions  defined by $\sigma_{a,b}(x)=x^a(1-x)^b$, for $0<a,b<1$ satisfies the conditions \textbf{(A1)} and \textbf{(A2)}. When $a=b=\frac12$ it corresponds to the W-F diffusion.
\end{remark}

\section{Wright - Fisher diffusion: Asymptotic Expansion and Gaussian approximation}\label{section: approximation}
In the following we concentrate in the main subject of this paper: the W-F diffusion ( $a=b=1/2$). 

A parallel development can be carried out for other values of the parameters, although we will not consider this problem in the present work.
\subsection{Analytical expression for the transition density}\label{section: AE}
The goal of this section will be to prove an exact analytical expression for the transition density and its asymptotic expansion for small $t$, of the W-F diffusion. When this diffusion is considered, it produces the following
\begin{align}
& dX(t) = \sqrt{X(t) (1 - X(t))} dW(t) \label{sde-WF} \\
& dY(t) = - \dfrac{1}{2} \cot (Y(t)) dt + dW(t) \label{transf-WF}
\end{align}

Let us recall that in this case the following function writes $\nu(y) =\mu^2(F^{-1}(y)) + \mu'(F^{-1}(y))= \frac{1}{4} \cot^{2}(y) + \frac{1}{2} \csc^{2}(y) = \frac{1}{2} + \frac{3}{4} \cot^{2}(y)$. The main result is 
\begin{theorem}\label{theo: density W-F}
The transition density of the W-F diffusion can be written as 
\begin{multline}(i) \qquad\label{eq: AE1}
p^X_t(x_0,x)=\frac{(x_0(1-x_0))^{\frac{1}{4}}}{(x(1-x))^{\frac{3}{4}}}\\\times\E \left[e^{-\frac t2\int_0^1\nu((1-u)F(x_0)+u F(x)+\sqrt tB(u))du}\right]q_t(F(x),F(x_0)),
\end{multline}

and its asymptotical expansion as
\begin{multline}
(ii) \qquad p^X_{t}(x_0,x) = \frac{(x_{0} (x - x_{0}))^{\frac14}}{(x(1-x))^{\frac34}} \exp \left [-\frac {1}{2t} (F(x) - F(x_{0}))^{2}  \right ] \\
\times\left[ 1 - \frac{t}{2}  \, \frac{1}{F(x) - F(x_{0})} \int_{F(x_{0})}^{F(x)} \nu (u) du + o(t) \right] \label{eq: AE2}
\end{multline}
for all $x_0, x \in (0,1)$.
\end{theorem}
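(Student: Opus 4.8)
The plan is to obtain (i) by specializing Proposition \ref{proposition asumption} to the W-F case and then transporting the result to $X$ through Lemma \ref{lemma: density transformed process}, and to obtain (ii) by feeding the small-$t$ expansion of Lemma \ref{expec-td} (equivalently Corollary \ref{AE1}) into (i). First I would assemble the Lamperti ingredients for $\sigma(x)=\sqrt{x(1-x)}$. The integral $F(x)=\int_0^x (u(1-u))^{-1/2}\,du$ evaluates to $F(x)=2\arcsin\sqrt{x}$, with inverse $F^{-1}(y)=\sin^2(y/2)$ and $F(0)=0$, $F(1)=\pi$, so the transformed process $Y$ lives on $[0,\pi]$. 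Substituting $F^{-1}$ into $-\tfrac12\sigma'$ gives the drift $\mu(y)=-\tfrac12\cot y$ of \eqref{transf-WF}, and hence $\nu(y)=\mu^2+\mu'=\tfrac14\cot^2 y+\tfrac12\csc^2 y=\tfrac12+\tfrac34\cot^2 y$. A short computation shows $\sigma''(x)=-\tfrac14\,(x(1-x))^{-3/2}<0$, so $\sigma$ is strictly concave; consequently $|\sigma''|=-\sigma''$ and the function $V$ of Subsection \ref{principal} coincides with $\nu$. Together with the Remark after Proposition \ref{proposition asumption} (that $\sigma_{1/2,1/2}$ satisfies (A1)--(A2)), this lets me invoke the Proposition to write $p^Y_t(y_0,y)=\E\big[e^{-\frac t2\int_0^1\nu((1-u)y_0+uy+\sqrt t B(u))\,du}\big]\,q_t(y_0-y)\,e^{M(y)-M(y_0)}$.

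Next I would make the exponential factor explicit. Using $M(z)=-\tfrac12\big(\log\sigma(F^{-1}(z))-\log\sigma(F^{-1}(0))\big)$ from the Drift function subsection, the boundary terms cancel in the difference and
\[
M(F(x))-M(F(x_0))=-\tfrac12\big(\log\sigma(x)-\log\sigma(x_0)\big)=\tfrac14\log\frac{x_0(1-x_0)}{x(1-x)},
\]
so that $e^{M(F(x))-M(F(x_0))}=\big(x_0(1-x_0)/(x(1-x))\big)^{1/4}$. Applying Lemma \ref{lemma: density transformed process} with $F'(x)=1/\sigma(x)=(x(1-x))^{-1/2}$, the full prefactor becomes $\big(x_0(1-x_0)/(x(1-x))\big)^{1/4}\cdot (x(1-x))^{-1/2}=(x_0(1-x_0))^{1/4}/(x(1-x))^{3/4}$, which is precisely the prefactor in (i); the remaining Brownian-bridge expectation and the Gaussian kernel $q_t(F(x),F(x_0))$ carry over unchanged, establishing (i).

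For (ii) I would apply Lemma \ref{expec-td} with the arguments $x\mapsto F(x_0)$ and $y\mapsto F(x)$, which gives
\[
\E\Big[e^{-\frac t2\int_0^1\nu(\,\cdot\,)\,du}\Big]=1-\frac{t}{2\big(F(x)-F(x_0)\big)}\int_{F(x_0)}^{F(x)}\nu(u)\,du+o(t),
\]
and substitute this together with $q_t(F(x)-F(x_0))=(2\pi t)^{-1/2}\exp\!\big[-\tfrac{1}{2t}(F(x)-F(x_0))^2\big]$ into the expression from (i). Collecting the prefactor, the Gaussian exponential $\exp[-\tfrac{1}{2t}(F(x)-F(x_0))^2]$, and the bracketed first-order term reproduces (ii) verbatim.

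I expect the main obstacle to lie not in these specializations but in justifying the machinery they rest on for a truly singular $\sigma$. The function $\nu(y)=\tfrac12+\tfrac34\cot^2 y$ blows up like $\tfrac34 y^{-2}$ as $y\to 0$ and like $\tfrac34(\pi-y)^{-2}$ as $y\to\pi$, so both the approximation/dominated-convergence argument underpinning Proposition \ref{proposition asumption} and the expansion step of Lemma \ref{expec-td} must be shown to survive the boundary. The saving feature is that the hypotheses restrict $x_0,x$ to the interior, whence $F(x_0),F(x)\in(0,\pi)$; for $t$ small the bridge path $(1-u)F(x_0)+uF(x)+\sqrt t B(u)$ stays in a compact subinterval of $(0,\pi)$ with overwhelming probability, so $\nu$ is bounded along it and the linearization $e^{f}-1\approx f$ is legitimate, while $\int_{F(x_0)}^{F(x)}\nu(u)\,du$ converges. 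Checking that assumptions (A1)--(A2) indeed produce the required domination near the singular endpoints (so that $V_n=\nu_n\ge0$ and $p^{Y_n}_t\to p^Y_t$) is the delicate point, and it is exactly the role those assumptions are designed to play.
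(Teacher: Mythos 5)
Your proposal is correct and follows essentially the same route as the paper: specialize Proposition \ref{proposition asumption} to $\sigma(x)=\sqrt{x(1-x)}$ (with $F(x)=2\arcsin\sqrt{x}$, $F^{-1}(y)=\sin^2(y/2)$, $\mu(y)=-\tfrac12\cot y$, and $e^{M(F(x))-M(F(x_0))}=\bigl(x_0(1-x_0)/(x(1-x))\bigr)^{1/4}$), transfer to $X$ via Lemma \ref{lemma: density transformed process} to obtain the prefactor $(x_0(1-x_0))^{1/4}/(x(1-x))^{3/4}$ in (i), and then apply the Taylor/small-$t$ expansion of Lemma \ref{expec-td} and Corollary \ref{AE1} for (ii), exactly as the paper does. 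Your closing observation about controlling the singularity of $\nu$ at $0$ and $\pi$ is a point the paper handles only implicitly through assumptions (A1)--(A2) and dominated convergence, and note also that the factor $(x_0(x-x_0))^{1/4}$ in the published statement of (ii) is a typo for $(x_0(1-x_0))^{1/4}$, which your derivation correctly produces.
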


\begin{proof}
Let us start by proving $(i)$. For ease of notation let us write the transition density of $Y(t)$, based on Proposition \ref{proposition asumption}, as 
\begin{multline}\label{eq: density of y}
    p_t^Y(y_0,y) =  q_t(y_0,y) \exp \left (M(y) - M(y_{0})\right ) \\
     \times \, \E \exp \left [-\frac t2\int_0^1 \nu(\sqrt tB(s)+(1-s)y_{0}+sy)ds  \right ],
\end{multline}
where \begin{equation*}
    q_t(y_0,y)=\frac{1}{\sqrt{2\pi t}}\exp \left [-\frac{(y_0-y)^2}{2t} \right ] \text{ and }M(y) = \int_0^y - \frac{1}{2}\cot(y) \, dy.
\end{equation*}
By equation \eqref{eq: density transformed process}, we obtain
\begin{multline}\label{eq: density of y 2}
    p_t^X(x_0,x) =  q_t(F(x_0)),F(x)) \exp \left [M(F(x_0)) - M(F(x))\right ] \frac{1}{\sqrt{x(1-x)}}\\
     \times \, \E \exp \left [-\frac t2\int_0^1 \nu(\sqrt tB(s)+(1-s)F(x_{0})+sF(x))ds  \right ].
\end{multline}
Let us further develop the terms in equation \eqref{eq: density of y}. For every pair of positive real numbers $y$ and $y_0$ such that $y > y_0$, we have that
\begin{IEEEeqnarray*}{rCl}
 M(y)-M(y_0)  &= &  -\frac{1}{2}\int_{y_0}^{y}\cot(u)\,du \\
  & = &  \frac{1}{2} \left (\log|\sin y|- \log|\sin y_0| \right ) \\
   & = & \frac{1}{4} \left[\log\sin^2(y_0)-\log\sin^2(y) \right] \\
   & = & \frac{1}{4}\left [\log \left (\sin^2 \left (\frac{y_0}{2} \right )\cos^2 \left (\frac {y_0}{2} \right ) \right ) -\log \left (\sin^2 \left (\frac {y}{2} \right )\cos^2 \left (\frac {y}{2} \right ) \right)\right ],
\end{IEEEeqnarray*}
where, in the last equation, it was used that $\sin(2\theta) = 2\sin(\theta)\cos(\theta)$ for all $\theta \in \R$. Then, using the definition of $F$ in \eqref{eq: definition of g} and setting $F(x_0)=y_0 $ and $F(x)=y$, we obtain
\begin{equation}\label{eq: inverse}
F^{-1}(y) = \sin ^2(\frac{y}{2}), \quad F'(F^{-1}(y))=\frac1{\sin y} \quad \text{and} \quad F''(F^{-1}(y))=-4\,\frac{\cot y}{\sin^2y}.
\end{equation}

Replacing this result, it yields
\begin{equation*}
    M(F(x))-M(F(x_0))=\frac{1}{4}\big(\log(x_0(1-x_0))-\log(x(1-x))\big)=\log \left (\frac{x_0(1-x_0)}{x(1-x)} \right )^{\frac{1}{4}},
\end{equation*}
and thus,
\begin{equation}\label{eq: exp M}
    \exp \left [M(F(x_0)) - M(F(x))\right ] = \left (\frac{x_0(1-x_0)}{x(1-x)} \right )^{\frac{1}{4}}.
\end{equation}

Plugging \eqref{eq: exp M} in equation \eqref{eq: density of y} we obtain
\begin{multline}\label{eq: density of x 1}
p^X_t(x_0,x)=\frac{(x_0(1-x_0))^{\frac{1}{4}}}{(x(1-x))^{\frac{3}{4}}}\\\times\E \left[e^{-\frac t2\int_0^1\nu((1-u)F(x_0)+u F(x)+\sqrt tB(u))du} \right]q_t(F(x),F(x_0)).
\end{multline}

To show $(ii)$, we proceed as \eqref{eq: asymptotic expansion}. Using Taylor expansion we get \eqref{eq: AE2}.
\end{proof}

\begin{remark}
The result in $(ii)$ is completely in accord to the one of \citet{Voronka1975} (formula (3.16)). Although we have worked with the function, under its notation, $a(x)=\sqrt{x(1-x)}$ and with a drift zero in the equation \eqref{eq: WF-selection}, the same procedure can be used for considering more general functions $a(\cdot)$ and $b(\cdot)$.
\end{remark}
\clearpage

\subsection{Representation of the exact analytical expression}
To gain some intuition, the following figures depict the behavior of the exact analytical expression derived in the equation $\eqref{eq: density of x 1}$. The expectation was calculated by Monte Carlo method, simulating 500 realizations of the Brownian bridge. 
\begin{figure}[h!]
    \begin{subfigure}[t]{\textwidth}
        \centering
        \includegraphics[scale=0.2]{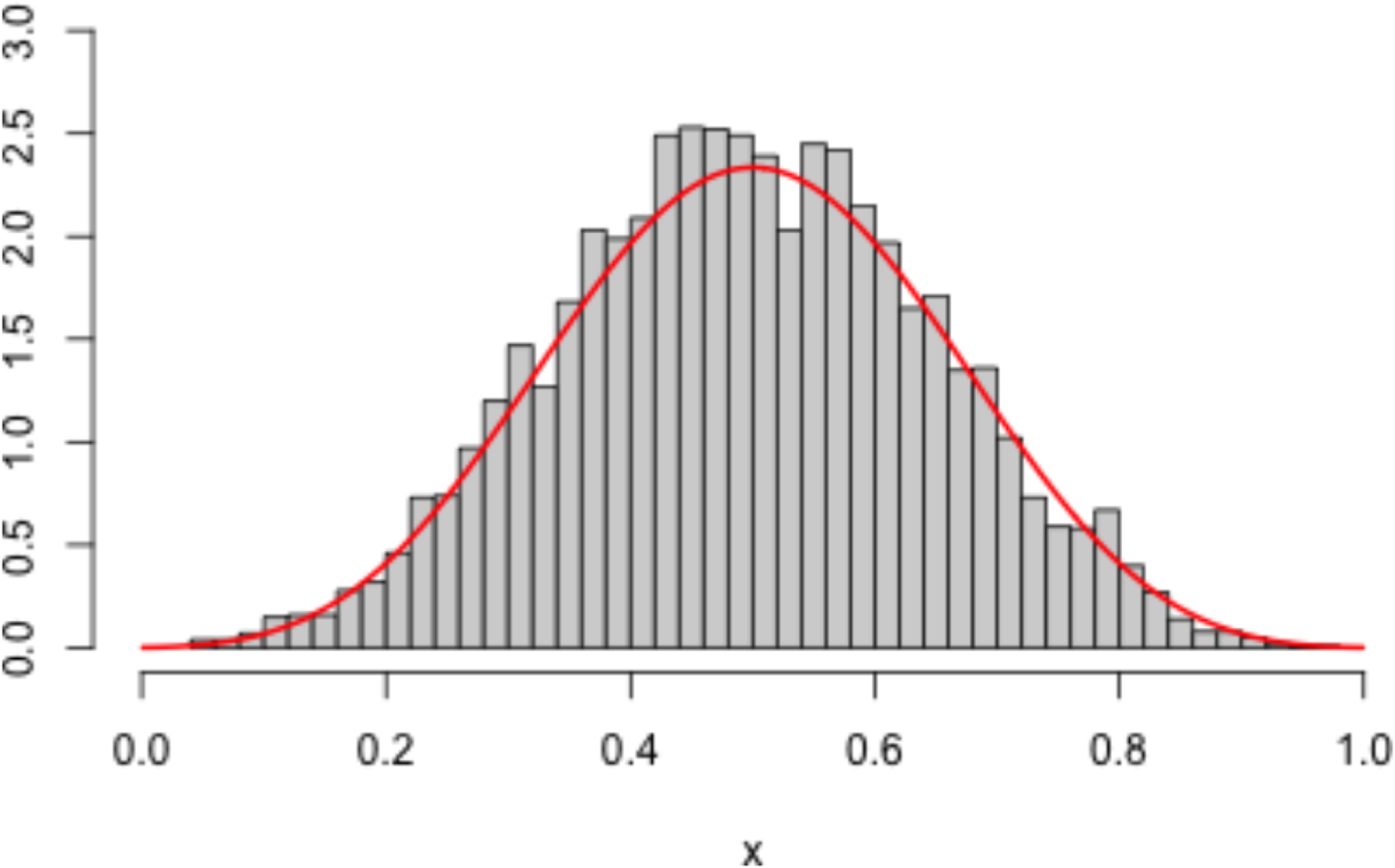}
        \caption{$t=0.1$}
    \end{subfigure}%
        \hfill
    \begin{subfigure}[t]{\textwidth}
        \centering
        \includegraphics[scale=0.2]{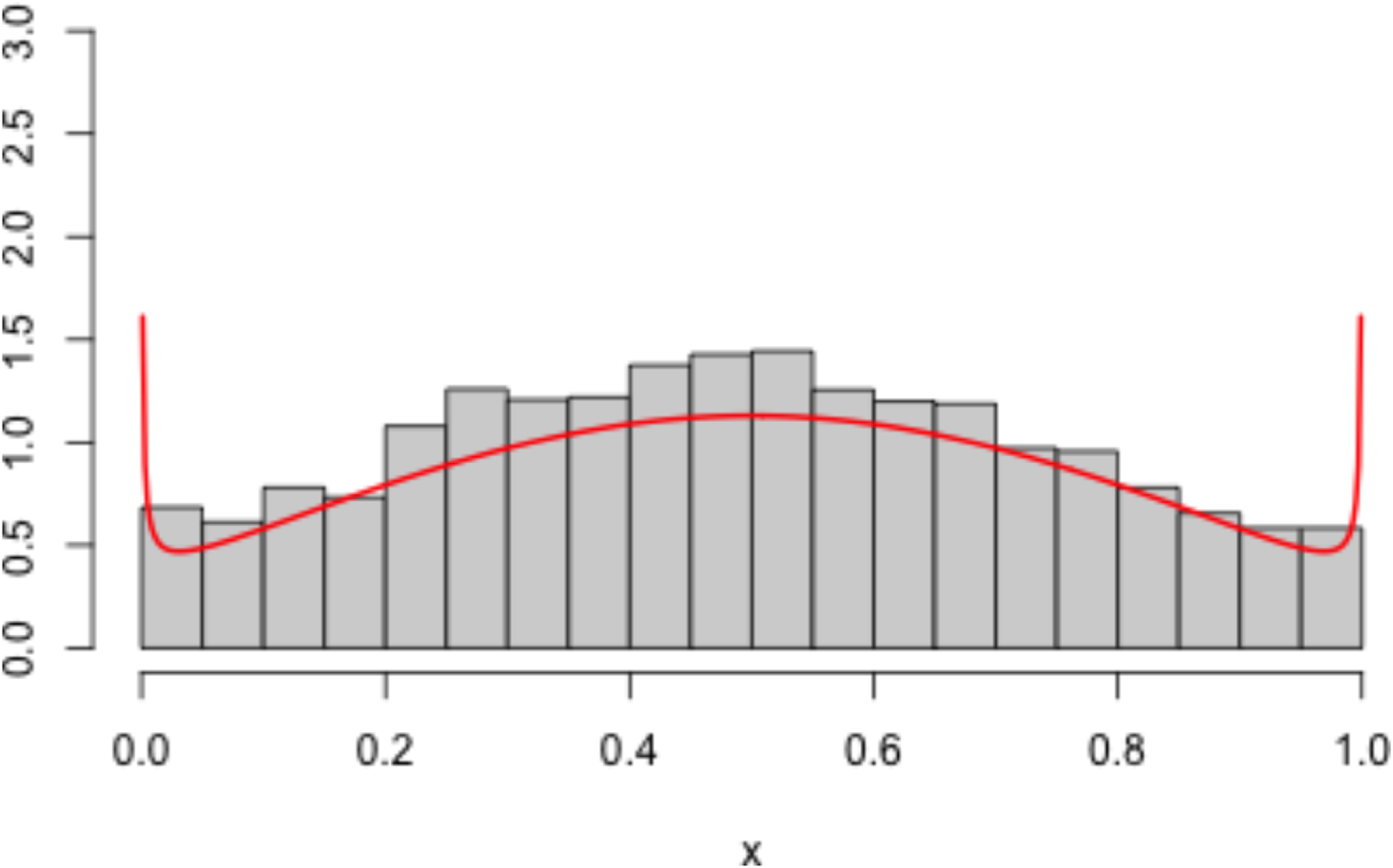}
        \caption{$t=0.3$}
    \end{subfigure}
    \hfill
\end{figure}
\clearpage
\begin{figure}[h!] \ContinuedFloat
    \begin{subfigure}[t]{\textwidth}
        \centering
        \includegraphics[scale=0.2]{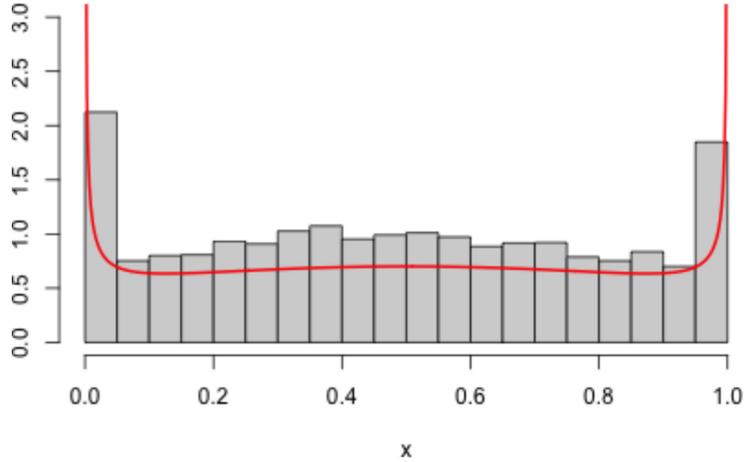}
        \caption{$t=0.5$}
    \end{subfigure}%
\caption{Representation of the exact analytical expression (red line), and simulated process (histogram)  for the transition density at different fixed times $t$, starting from $x_0$ .} \label{exact-analytical}
\end{figure}

For the three time points shown in Figure \ref{exact-analytical}, the new exact analytical formula (red line) we developed for the transition density fits the true distribution of the allele frequency for a given time and captures well the behavior at the extremes of the interval. 

In practice, this newly developed analytical expression enables us to precisely calculate the likelihood and to state the estimation problem of determining $\hat{x}_0$, i.e., the estimated initial allele frequency. However, to perform parametric inference, it is necessary to consider multiple populations. This involves finding the joint density and studying the relationship between the different populations generated. These subjects are not considered here and will be left for further research.

\subsection{Gaussian approximation}\label{section: gaussian approximation}
In several estimation problems linked to W-F model, it is common to use  the following Gaussian approximation
\begin{equation*}
    p^X_{t}(x_0,x)\approx\frac1{\sqrt{2\pi t}}\frac{1}{(x_0(1-x_0))^{\frac{1}{2}}}e^{-\frac {1}{2t}\frac{(x-x_0)^2}{(x_0(1-x_0))}}.
\end{equation*}

In this section we prove this result theoretically as an approximation of the transition density proved in Theorem \ref{theo: density W-F} and we study under which theoretical conditions it holds. However, the result wil be valid only over a range of values linking $t$ and $x$.

\begin{theorem}\label{theo: gaussian approximation}
The transition density in Theorem \ref{theo: density W-F} can be approximated by the Gaussian density
\begin{equation}
    \varphi_t^X(x_0, x) = \frac{1}{\sqrt{2\pi t}}\frac{1}{(x_0(1-x_0))^{\frac{1}{2}}}\exp \left (-\frac {1}{2t}\frac{(x-x_0)^2}{(x_0(1-x_0))} \right ).
\end{equation}
\begin{proof}
Let $p_t^X$ be the the AE of the W-F process from the equation \eqref{eq: AE1}, and $\sigma(x) = \sqrt{x(1-x)}$. Let us introduce the following  auxiliary function
\begin{equation*}
    \tilde p^X_{t}(x_0,x)=\frac1{\sqrt{2\pi t}} \frac{(x_0(1-x_0))^{\frac{1}{4}}}{(x(1-x))^{\frac{1}{4}}(x_0(1-x_0))^{\frac{1}{2}}} \exp \left [-\frac{1}{2t}\Big(\int_{x_0}^{x}\sigma^{-1}(u)du\Big)^2 \right],
\end{equation*}

recalling the definition of $F(x)$ in \eqref{eq: definition of g}. We can write
\begin{equation}\label{GA1}
    p^X_{t}(x_0,x)=\tilde p^X_{t}(x_0,x)+(p^X_{t}(x_0,x)-\tilde p^X_{t}(x_0,x)).
\end{equation}

Let us study the second term on the right hand side. Substituting $p^X_{t}$ by its formula, we have
\begin{multline*}
    |p^X_{t}(x_0,x)-\tilde p^X_{t}(x_0,x)|\\= \left |\frac1{\sqrt{2\pi t}} \frac{(x_0(1-x_0))^{\frac{1}{4}}}{(x(1-x))^{\frac{1}{4}}}\frac{[1-x_0+x][(x_0(1-x_0))^{\frac{1}{2}}+(x(1-x))^{\frac{1}{2}}]^{-1}}{(x_0(1-x_0))^{\frac{1}{2}}(x(1-x))^{\frac{1}{2}}} \right.\\
    \times \left. \exp \left [-\frac{1}{2t} \left (\int_{x_0}^{x}\sigma^{-1}(u)du \right)^2 \right ](x-x_0) \right |+O\left(t \right).
\end{multline*}

Plugging this into \eqref{GA1}, it yields 
\begin{multline*}
p^X_{t}(x_0, x) =\frac1{\sqrt{2\pi t}}\frac{(x_0(1-x_0))^{\frac{1}{4}}}{(x(1-x))^{\frac{1}{4}}}\frac{1}{(x_0(1-x_0))^{\frac{1}{2}}}\exp\left [-\frac{1}{2t} \left (\int_{x_0}^{x}\sigma^{-1}(u)du \right )^2 \right ]\\
\times\left(1+\frac{[1-x_0+x][(x_0(1-x_0))^{\frac{1}{2}}+(x(1-x))^{\frac{1}{2}}]^{-1}}{(x(1-x))^{\frac{1}{2}}}(x-x_0)+O\left(t \right) \right).
\end{multline*}

Since,
\begin{equation} \label{limit}
    \lim_{x\to x_0}\frac{[1-x_0+x][(x_0(1-x_0))^{\frac{1}{2}}+(x(1-x))^{\frac{1}{2}}]^{-1}}{(x(1-x))^{\frac{1}{2}}}=\frac1{2(x_0(1-x_0))}=O(1).
\end{equation}
we can then write
\begin{multline*}
p^X_{t}(x_0,x)=\frac1{\sqrt{2\pi}} \frac{(x_0(1-x_0))^{\frac{1}{4}}}{(x(1-x))^{\frac{1}{4}}}\frac{1}{(x_0(1-x_0))^{\frac{1}{2}}}\exp \left [-\frac{1}{2t}\Big(\int_{x_0}^{x}\sigma^{-1}(u)du\Big)^2 \right ]\\ \times (1+O(|x-x_0|)+O(t)).
\end{multline*}

The same type of expansion as above in a neighborhood of $x_0$ allows us to obtain
\begin{equation*}
    p^X_{t}(x_0,x)=\frac1{\sqrt{2\pi}}\frac{1}{(x_0(1-x_0))^{\frac{1}{2}}}e^{-\frac{1}{2t}(\int_{x_0}^{x}\sigma^{-1}(u)du)^2}(1+O(|x-x_0|)+O(t)).
\end{equation*}

Finally, to obtain the Gaussian approximation, we first need the following estimate
\begin{multline*}|e^{-\frac{1}{2t}(\int_{x_0}^{x}\sigma^{-1}(u)du)^2}-e^{-\frac{1}{2t}\frac{(x-x_0)^2}{(x_0(1-x_0))}}|\\
=e^{-\frac{1}{2t}\frac{(x-x_0)^2}{(x_0(1-x_0))}}|e^{-\frac{1}{2t}[(\int_{x_0}^{x}\frac1{\sqrt{(u(1-u))}})du)^2-(\int_{x_0}^x\frac1{\sqrt{(x_0(1-x_0)})}du)^2)]}-1|.\end{multline*}

To bound this expression, let us assume that
\begin{itemize}
    \item $\min\{\min\{1-x_0,1-x\};\min\{x_0,x\}\}>\delta$, and
    \item $\frac{1}{t}(x-x_0)^2$ is small enough.
\end{itemize}

Then, by using the Taylor expansion for the exponential function, we get
\begin{equation*}
    |e^{-\frac{1}{2t}[(\int_{x_0}^{x}\frac1{\sqrt{(u(1-u))}})du)^2-(\int_{x_0}^x\frac1{\sqrt{(x_0(1-x_0)})}du)^2)]}-1|\le\mathbf C\delta^{-2}\frac{1}{t}(x-x_0)^2.
\end{equation*}

Thus, finally all this yields
\begin{multline*}
p^X_{t}(x_0,x)=\frac1{\sqrt{2\pi t}}\frac{1}{(x_0(1-x_0))^{\frac{1}{2}}}e^{-\frac{1}{2t}\frac{(x-x_0)^2}{(x_0(1-x_0))}}\\\times (1+O(\frac{1}{t}(x-x_0)^2)+O(|x-x_0|)+O(t)).
\end{multline*}

Therefore, by choosing $|x-x_0|=O(t)$, we obtain the desired approximation. Indeed
\begin{equation*}
    p^X_{t}(x_0,x)=\frac1{\sqrt{2\pi t}}\frac{1}{(x_0(1-x_0))^{\frac{1}{2}}}e^{-\frac{1}{2t}\frac{(x-x_0)^2}{(x_0(1-x_0))}}(1+O(t)).
\end{equation*}
\end{proof}
\end{theorem}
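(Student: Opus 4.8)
The plan is to deduce the Gaussian form directly from the asymptotic expansion $(ii)$ of Theorem~\ref{theo: density W-F}, whose leading factor is
\[
\frac{(x_0(1-x_0))^{\frac14}}{(x(1-x))^{\frac34}}\exp\left[-\frac{1}{2t}(F(x)-F(x_0))^2\right],
\]
where $F(x)=\int_0^x\sigma^{-1}(u)\,du$ and $\sigma(u)=\sqrt{u(1-u)}$. Matching this against $\varphi_t^X$ amounts to two local replacements near the diagonal $x=x_0$: replacing the algebraic prefactor by $(2\pi t)^{-1/2}(x_0(1-x_0))^{-1/2}$, and replacing the quadratic form $(F(x)-F(x_0))^2$ in the exponent by $(x-x_0)^2/(x_0(1-x_0))$. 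I would handle the two replacements separately, track the relative error of each, and then restrict the $(t,x)$ regime so that all errors collapse to $O(t)$.

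For the prefactor, I would introduce an auxiliary function $\tilde p_t^X$ carrying the Gaussian constant $(2\pi t)^{-1/2}(x_0(1-x_0))^{-1/2}$ together with the unmodified exponent, and write $p_t^X=\tilde p_t^X+(p_t^X-\tilde p_t^X)$. A first-order Taylor expansion of $(x(1-x))^{-3/4}$ about $x_0$ gives $(x(1-x))^{3/4}=(x_0(1-x_0))^{3/4}(1+O(|x-x_0|))$, so this replacement costs a relative error $O(|x-x_0|)$; the limit \eqref{limit} is precisely the computation that pins down the constant in this estimate.

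For the exponent, the mean value theorem gives $F(x)-F(x_0)=F'(\xi)(x-x_0)$ with $F'(x_0)=\sigma^{-1}(x_0)=(x_0(1-x_0))^{-1/2}$, so the two quadratic forms agree to leading order and their difference is $O(|x-x_0|^3)$. Factoring out $\exp[-(x-x_0)^2/(2t\,x_0(1-x_0))]$ and Taylor-expanding the residual exponential, the relative error is of order $t^{-1}|x-x_0|^3$, hence in particular $O(t^{-1}(x-x_0)^2)$, once $x_0,x$ are bounded away from $0$ and $1$ by the hypothesis $\min\{x_0,x,1-x_0,1-x\}>\delta$ (used to bound the derivatives of $\sigma^{-1}$ by $\delta^{-2}$-type constants). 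Collecting the three error sources yields
\[
p_t^X(x_0,x)=\varphi_t^X(x_0,x)\bigl(1+O(t^{-1}(x-x_0)^2)+O(|x-x_0|)+O(t)\bigr),
\]
and choosing the range $|x-x_0|=O(t)$ forces every remainder to be $O(t)$, giving the stated approximation.

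The main obstacle is the $t^{-1}$ amplification inside the exponential: because the quadratic form enters as $\exp[-(\cdot)/2t]$, even the cubic discrepancy between $(F(x)-F(x_0))^2$ and $(x-x_0)^2/(x_0(1-x_0))$ only becomes negligible after $|x-x_0|$ is tied to $t$, which is why the approximation is valid solely on the range $|x-x_0|=O(t)$ and not globally. Keeping all constants uniform demands the $\delta$-separation from the boundary, since $F'$ and its higher derivatives blow up as $x\to0$ or $x\to1$; without it the Taylor remainders are uncontrolled and the Gaussian approximation degrades near fixation, exactly the regime flagged earlier in the paper.
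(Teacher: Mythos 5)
Your proposal reproduces the paper's proof essentially step for step: the same auxiliary function $\tilde p_t^X$ and splitting $p_t^X=\tilde p_t^X+(p_t^X-\tilde p_t^X)$, the same use of the limit \eqref{limit} to make the prefactor replacement cost a relative error $O(|x-x_0|)$, the same $\delta$-separation plus Taylor expansion of the exponential to replace $(F(x)-F(x_0))^2$ by $(x-x_0)^2/(x_0(1-x_0))$, and the same final restriction $|x-x_0|=O(t)$ collapsing all remainders to $O(t)$. Your remark that the exponent discrepancy is in fact cubic, giving a relative error $O\bigl(t^{-1}|x-x_0|^3\bigr)$, mildly sharpens the paper's stated bound $O\bigl(t^{-1}(x-x_0)^2\bigr)$ but changes nothing in the argument or its conclusion.
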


\begin{remark}
Through the proof we should note that the Gaussian density holds while $t$ is small, \emph{i.e.} $n << 2N$ but also that $|x-x_0|=O(t)$. If $t$ is small, the last condition gives a range of application of the Gaussian approximation for statistical inference. Despite this drawback, when $|x-x_0|$ is large with respect to $t$ the presence in the product of the $q_t(x-x_0)$ term makes the contribution to the density small for this range of values.
\end{remark}

\subsection{Generalization: mutation and selection}
The W-F model can be generalized to include mutation and selection. Following the notation in \cite{Ewens2004}, the W-F diffusion with mutation and selection is the solution of the following SDE 
\begin{equation}\label{eq: WF-selection}
    \arraycolsep=1.4pt
    \left \{ \begin{array}{rcl}
    dX(t)&=& \mu(X(t)) \, dt + \sigma(X(t)) \, 
    dW(t)\\
    X(0)&=& x_0,
    \end{array}\right.
\end{equation}
where $\mu(x) =  \alpha x(1-x)(x+h(1-2x)) - \beta_1x + \beta_2(1-x)$, $\sigma(x) = \sqrt{x(1-x)}$, $h$ is heterozygosity, and $\alpha$, $\beta_1$ and $\beta_2$ are rescaled selection and mutation rates, respectively. \\

The result stated in Theorem (\ref{theo: density W-F}) can be extended to the more general W-F process with mutation and selection defined in the equation \eqref{eq: WF-selection}. 

\begin{proposition}\label{theo: density W-F mutation}
Let $\{X(t)\}_{t \geq 0}$ be the the W-F diffusion with selection and mutation. Under a purely mutational regime, i.e. with the selection coefficient $\alpha$ equal to 0, the transition density can be written as 
\begin{equation*}
    p^X_{t}(x_0,x)=\frac1{\sqrt{2\pi t}} \frac{(1-x_0)^{(\frac14-\beta_1)}x_0^{(\frac14-\beta_2)}}{(1-x)^{\frac34-\beta_1}x^{\frac34-\beta_2}}\exp\Big[{-\frac{1}{2t}(\int_{x_0}^x\sigma^{-1}(u)du)^2}\Big](1+O(t)).
\end{equation*}
for $x, x_0 \in (0,1)$. If the mutation coefficients are equal to $0$, then the transition density can be written as:
\begin{equation*}
    p^X_{t}(x_0,x)=\frac1{\sqrt{2\pi t}} \frac{x_0^{(\frac14-h)}}{x^{(\frac34-h)}}\frac{(1-x_0)^{(\frac14+\alpha-h)}}{(1-x)^{(\frac34+\alpha-h)}}\exp\Big[{-\frac{1}{2t}(\int_{x_0}^x\sigma^{-1}(u)du)^2}\Big](1+O(t)).
\end{equation*}
\end{proposition}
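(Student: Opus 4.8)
The plan is to treat both regimes at once, the only novelty relative to Theorem~\ref{theo: density W-F} being the nonzero drift $b(\cdot)$ of \eqref{eq: WF-selection}. First I would apply the Lamperti transform $Y(t)=F(X(t))$ with $F(x)=\int_0^x\sigma^{-1}(u)\,du$ and $F^{-1}(y)=\sin^2(y/2)$; because $\sigma(x)=\sqrt{x(1-x)}$ is exactly the one handled in the neutral case, assumptions \textbf{(A1)}--\textbf{(A2)} remain in force, and by \eqref{eq: transformed process 1} the transformed process solves $dY=\mu(Y)\,dt+dW$ with
\[
\mu(y)=\frac{b(F^{-1}(y))}{\sigma(F^{-1}(y))}-\tfrac12\sigma'(F^{-1}(y)).
\]
I would then invoke Lemma~\ref{lemma: transition density} and Proposition~\ref{proposition asumption} to write $p_t^Y(y_0,y)=q_t(y_0,y)\,e^{M(y)-M(y_0)}\,\E\exp[-\tfrac t2\int_0^1\nu(\cdots)\,ds]$, where $M(y)=\int_0^y\mu$ and $\nu=\mu^2+\mu'$.

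The decisive computation is the exponential prefactor. Changing variables by $s=F(u)$, $ds=\sigma(u)^{-1}\,du$ gives
\[
M(F(x))-M(F(x_0))=\int_{x_0}^x\frac{b(u)}{\sigma^2(u)}\,du-\tfrac12\log\frac{\sigma(x)}{\sigma(x_0)},
\]
whose second term reproduces the neutral factor $(x_0(1-x_0))^{1/4}/(x(1-x))^{1/4}$ exactly as in Theorem~\ref{theo: density W-F}, so all the new dependence sits in the first integral. For the purely mutational regime ($\alpha=0$) that integrand splits by partial fractions as $b(u)/\sigma^2(u)=\beta_2/u-\beta_1/(1-u)$, hence integrates to $\beta_2\log(x/x_0)+\beta_1\log\big((1-x)/(1-x_0)\big)$, and exponentiation yields the factors $(x/x_0)^{\beta_2}\big((1-x)/(1-x_0)\big)^{\beta_1}$. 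The selection-only regime ($\beta_1=\beta_2=0$) is treated identically, by carrying out the corresponding integration of $b(u)/\sigma^2(u)$ and reading off the $\alpha,h$-dependent exponents; here too the boundary-singular power laws are precisely those produced by the poles of the integrand at $u=0$ and $u=1$.

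It remains to dispatch the Brownian-bridge factor and return to $X$. Since $y\neq y_0$ whenever $x\neq x_0$, Corollary~\ref{AE1} (equivalently Lemma~\ref{expec-td}) gives $\E\exp[-\tfrac t2\int_0^1\nu(\cdots)\,ds]=1-\tfrac{t}{2(y-y_0)}\int_{y_0}^y\nu+o(t)=1+O(t)$, which is absorbed into the stated factor $(1+O(t))$. Transferring to the original process by Lemma~\ref{lemma: density transformed process} multiplies the density by $F'(x)=\sigma(x)^{-1}=(x(1-x))^{-1/2}$, raising each of the $x$ and $1-x$ exponents in the denominator by $\tfrac12$; together with $q_t$ contributing $(2\pi t)^{-1/2}$ and the identity $y-y_0=\int_{x_0}^x\sigma^{-1}(u)\,du$ inside the Gaussian exponent, this assembles both displayed formulas.

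The main obstacle is not this algebra but the justification of the limiting procedure at the singular endpoints. With $b\neq0$ the transformed drift $\mu$ carries, besides the $\cot$ singularity already present in the neutral case, the extra term $b/\sigma$, which also blows up as $y\to F(0)=0$ and $y\to F(1)=\pi$; consequently $\nu=\mu^2+\mu'$ is no longer obviously $O(|x|^2)$ as required by \eqref{eq: moment}, and the approximating-sequence/dominated-convergence argument underpinning Proposition~\ref{proposition asumption} must be re-examined for the augmented drift. I would address this by approximating $\sigma$ (hence $\mu$ and $\nu$) with the same truncated concave sequence of \textbf{(A2)}, checking that the mutation/selection contribution to $\nu$ stays dominated uniformly in $n$ on the range of interest, and only then exchanging limit and expectation. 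This endpoint analysis, rather than the transition-density identity itself, is where I expect the real difficulty to lie.
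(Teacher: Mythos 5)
Your mutation case is correct and is, modulo coordinates, exactly the paper's own computation: the paper works in the transformed $y$-variable, integrating $\tilde\mu(y)=\left(\tfrac14-\beta_1\right)\tan\left(\tfrac y2\right)-\left(\tfrac14-\beta_2\right)\cot\left(\tfrac y2\right)$ and substituting $F^{-1}(y)=\sin^2(y/2)$ at the end, whereas you integrate $b(u)/\sigma^2(u)=\beta_2/u-\beta_1/(1-u)$ directly in the $x$-variable; both routes give
\[
\exp\left[M(F(x))-M(F(x_0))\right]=\left(\frac{1-x_0}{1-x}\right)^{\frac14-\beta_1}\left(\frac{x_0}{x}\right)^{\frac14-\beta_2},
\]
and, after the factor $F'(x)=(x(1-x))^{-1/2}$ from Lemma \ref{lemma: density transformed process} and the $(1+O(t))$ expansion of the bridge term, the first display of the proposition. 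Your change of variables $s=F(u)$ is a cleaner packaging of the same algebra.

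The genuine gap is the selection case, which you never actually carry out and whose announced mechanism fails. With the drift of \eqref{eq: WF-selection} at $\beta_1=\beta_2=0$, namely $b(u)=\alpha u(1-u)\big(u+h(1-2u)\big)$, your integrand is
\[
\frac{b(u)}{\sigma^{2}(u)}=\alpha\big(u+h(1-2u)\big),
\]
which is bounded on $[0,1]$ and has \emph{no} poles at $u=0,1$; your assertion that ``the boundary-singular power laws are precisely those produced by the poles of the integrand'' is therefore false in this regime, and a faithful execution of your recipe yields the smooth factor $\exp\big[\alpha(1-2h)\tfrac{x^{2}-x_0^{2}}{2}+\alpha h(x-x_0)\big]$ rather than the claimed exponents $\tfrac14-h$, $\tfrac34-h$, $\alpha$. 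The paper's appendix (subsection \ref{OSC2}) reaches the stated formula because it computes with the transformed drift $\tfrac{4h-1}{2}\cot Y+\alpha\tan\left(\tfrac Y2\right)$, which in your notation corresponds to $b(u)/\sigma^{2}(u)=\big(h(1-2u)+\alpha u\big)/(u(1-u))=h/u+(\alpha-h)/(1-u)$, i.e.\ an effective \emph{linear} drift $h(1-2u)+\alpha u$ and not the cubic of \eqref{eq: WF-selection} (the cubic gives $b/\sigma=\alpha\sqrt{u(1-u)}\,(u+h(1-2u))$, which does not match the appendix's first display). So your plan, carried out honestly, contradicts the second formula of the statement; to close the proof you must either identify and justify the effective drift the paper actually uses, or flag the mismatch explicitly. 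Your closing concern about condition \eqref{eq: moment} and assumptions \textbf{(A1)}--\textbf{(A2)} for the augmented drift is legitimate but remains a plan rather than a proof; note, for fairness, that the paper does not address this point either, passing directly from the transformed SDE to \eqref{primera1}.
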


We leave its proof to the appendix \ref{OSC1} and \ref{OSC2}, respectively.

\section{Model evaluation through simulations: results and discussion}\label{section: model evaluation}

To evaluate the theoretical results proven previously, we first simulated the discrete W-F process to obtain the transition density after a certain number of generations for a fixed population size. Secondly, we fitted a continuous density and then compared this density with the ones proposed studied in section \ref{section: approximation}.

We simulated the evolution of the allele frequency in a population with constant size ($2N=1000$) for 500 generations, with initial allele frequencies ($x_0$) ranging from $x_0=0.1$ to $x_0 = 0.9$. The allele frequencies at generation $n$ were computed as the number of sampled alleles divided by $2N$. The alleles at generation $n$ were sampled from the previous one from a Binomial distribution with parameters $2N$ and $X_{n-1}/2N$. Rescaling the time as $t=n/2N$ we have the process $X_t$, where $t$ varies from 0 to 0.5, with steps of size 0.001. For each $x_0$ we simulated $100$ trajectories.   





We compared the empirical distribution obtained from the simulations to two commonly used parametric distributions (Gaussian and Beta), to the asymptotic expansion found in section \ref{section: AE} (AE), and to the Gaussian approximation found in section \ref{section: gaussian approximation} (GaussA). 

To further describe the two parametric models, let us recall that given $X_0 = x_0$ the allele frequency $X_n$ in the discrete W-F model has known expected value and variance given by
\begin{equation*}
        \E(X_t) = x_0 \quad \text{and} \quad  \mathrm{var}(X_t) = \left (1 - \left (1  - \frac{1}{2N}\right)  \right)^nx_0(1-x_0).
\end{equation*}
Proof of this statement can be found in \cite{Tataru2015}. With the re-scaling of time such as $t = n/2N$ we obtain
\begin{equation}\label{eq: expectation and variance WF}
        \E(X_t) = x_0 \quad \text{and} \quad  \mathrm{var}(X_t)  \approx e^{-t}x_0(1-x_0).
\end{equation}

The four proposed models are thus:

\begin{enumerate}[i.]
    \item A Beta distribution with parameters $\alpha_t$ and $\beta_t$   such that the probability density function of $X_t$ is
        \begin{equation*}
            p(x; \alpha_t, \beta_t) = \frac{x^{\alpha_t-1}(1-x)^{\beta_t-1}}{\mathrm{B}(\alpha_t, \beta_t)},
        \end{equation*}
        where $\mathrm{B}(\alpha, \beta)$ is the beta function. Following \cite{Tataru2015}, the parameters $\alpha_t$ and $\beta_t$ will be written in terms of the analytical expressions for the  expected value and variance of $X_t$ 
        \begin{IEEEeqnarray}{rCl}
        \alpha_t &=& \left (\frac{\E(X_t)(1-\E(X_t))}{\mathrm{var}(X_t)} \right) \E(X_t) \\
        \beta_t & = & \left (\frac{\E(X_t)(1-\E(X_t))}{\mathrm{var}(X_t)} \right) (1-\E(X_t)).
        \end{IEEEeqnarray}
        
    \item A Gaussian distribution with mean and variance coming from the equation \eqref{eq: expectation and variance WF}.
    \item The AE derived in Section \ref{section: AE} without the term $O(t)$, i.e.
    \begin{equation*}
    p(x;x_0)=\frac1{\sqrt{2\pi t}} \frac{(x_0(1-x_0))^{\frac{1}{4}}}{(x(1-x))^{\frac34}}\exp \left [-\frac {1}{2t} \left (\int_{x_0}^{x}\sigma^{-1}(u)du \right )^2 \right ]
    \end{equation*}
    where $\sigma(x) = \sqrt{x(1-x)}$.
    \item The Gaussian approximation found in theorem \ref{theo: gaussian approximation}. To ease the notation we will refer to this approximation as GaussA.
\end{enumerate}

The main difference between the densities in (ii) and (iv) are the approximations of the variance. While for the density in (ii) we used $e^{-t}p_0(1-p_0)$, in (iv) we used its approximation $tp_0(1-p_0)$. These two expressions take almost the same values while $0<t<0.2$, but then the variance in (iv) grows faster than the variance in (ii).
In addition, the \textit{densities} in (iii) and (iv) are not true densities as they do not integrate 1. Hence they need to be corrected by a normalization constant. The method used to calculate this constant was Simpson's method due to its accuracy compared to, for example, the trapezoid method (see \cite{jorge2010}). Dividing by this constant we obtained proper probability densities.


\subsection{Comparison between continuous approximations}\label{section: comparison densities}

Since the allele frequency is obtained from a discrete process, and then compared to distributions that are continuous by definition, we need to either discretize the theoretical densities or approximate the empirical probability function of the simulated allele frequencies by a continuous density. Here, we chose to estimate the density of the allele frequencies through an adaptive method proposed by \citet{Bertin2014}.  From now on, we will denote the Adaptive Density Estimator as ADE.


The Beta distribution used for ADE has a bounded domain in the interval $[0,1]$, so it is appropriate for fitting allele frequencies. On one hand, the parameters can be estimated from its mean and variance, which have evolutionary sense. On the other hand, the Beta density estimator proposed by \citet{Bertin2011} is characterized through an unique tunning parameter, $b$, which determines the shape of the distribution, even being able to capture high concentrations at values close to 0 or 1. Then, the authors deal with an uniparametric Beta distribution: given a sample $X_1, \dots, X_n$, \citet{Bertin2011} defines the Beta kernel estimator as
\begin{equation*}
    \hat{f}_b(t) = \dfrac{1}{n} \sum_{k=1}^{n} K_{t,b} \left( X_{t_{k}} \right) , \quad t \in [0,1],
\end{equation*}
where
\begin{equation*}
    K_{t,b} (x) = \dfrac{x^{t/b} (1-x)^{(1-t)/b}}{B \left( \frac{t}{b} + 1, \frac{1-t}{b} + 1 \right)} \mathbf{1}_{(0,1)}(x).
\end{equation*}

To estimate the value of the $b$ parameter, we used an adaptive method for kernels from the Beta distribution, from \citet{Bertin2014}, based on the Lepski approach, which is a usual procedure for obtaining adaptive estimators. The primary objective of the cited work is to identify a trade-off between bias and variance (bandwidth problem in density estimation). 







The difference of this estimator and the proposed densities is that it is model-free, allowing to have the best continuous estimation of the density, but which is not very useful when making inference, for example, for testing or selection. This estimator allows us to have a continuous density and then compute distances between the model-free and the derived or most used densities for modelling the allele frequencies. Given the symmetry of the distributions with respect to the initial allele frequency, $x_0$, we demonstrate the results for the initial values $x_0=0.1$, $0.3$, and $0.5$. A remarkable fact is that the adaptive estimator obtained, fits well for all values initial values considered, and for the different generations.

\subsubsection*{Hellinger distance}


To compare the fit of the theoretical approximations to the ADE we will use the Hellinger distance between two distributions. Given two probability density functions $p$ and $q$, we define the Hellinger distance between the two as
\begin{equation}\label{eq: hell-cont}
    H^2(p,q) = \frac{1}{2} \int_{-\infty}^{+\infty} \left (\sqrt{p(x)} - \sqrt{q(x)} \right)^2 \, d x.
\end{equation}
An equivalent definition can be made for two probability mass functions $p$ and $q$ defined on $\mathcal{X} = \{x_i\}_{i \in \N}$; in this case the Hellinger distance has the following form
\begin{equation}
    H^{2} (p,q) = \dfrac{1}{2} \sum_{i=1}^{\infty} \left(\sqrt{p(x_i)} - \sqrt{q(x_i)}  \right)^{2}.
\end{equation}

\cite{Tataru2015} used the discrete version of the Hellinger distance to quantify and evaluate the distance between the distributions, discretizing the continuous approximation. We considered that, as these distributions concentrate probability on 0 and 1, and the problem is how we can approximate the densities with a continuous one, the discrete approach could be unfair with some distributions. Thus, we decided to compute the Hellinger distance for continuous distributions, as in \eqref{eq: hell-cont}, because discretizing the continuous distributions could potentially hide the problems that pose the atoms in the W-F model.

\begin{remark}
The following repository, \url{https://github.com/gerardo-martinez-j/wf-density}, contains the codes used to simulate this study 
\end{remark}

\subsection{Results} \label{section: results}

To understand how the different approximations fit the data, given the initial allele frequencies ($x_0$) and times ($t$), in Figure \ref{comparisons} we show the histograms of the allele frequencies obtained after 100 simulations of the W-F process, the densities of the Asymptotic Expansion (AE) and the Gaussian approximation (GaussA) the density obtained through the adaptive method (ADE). 
The generations included in the figure are $n = 0.100 \times 2N, 0.25\times 2N$ and $ 0.45\times 2N$, for the initial allele frequencies $x_0 = 0.1$, $x_0 = 0.3$, $x_0 = 0.5$.

We can see that the ADE fits well with the histograms for most of the $t$s and $x_0$s (see Discussion for an explanation about the goodness of fit and the variance of the estimator). From the proposed densities, we see that the Asymptotic Expansion captures better the shape of the histogram (getting closer to the ADE fit) than the Gaussian Approximation, especially for the higher number of generations and extreme $x_0$s. The latter, despite showing closeness to ADE, does not capture adequately the effect of accumulation at 0 or 1 as the value of $t/2N$ increases(Figure \ref{img: heatmap-hellinger}).


\begin{figure}[H]
      \centering
 \includegraphics[width=1.2\textwidth]{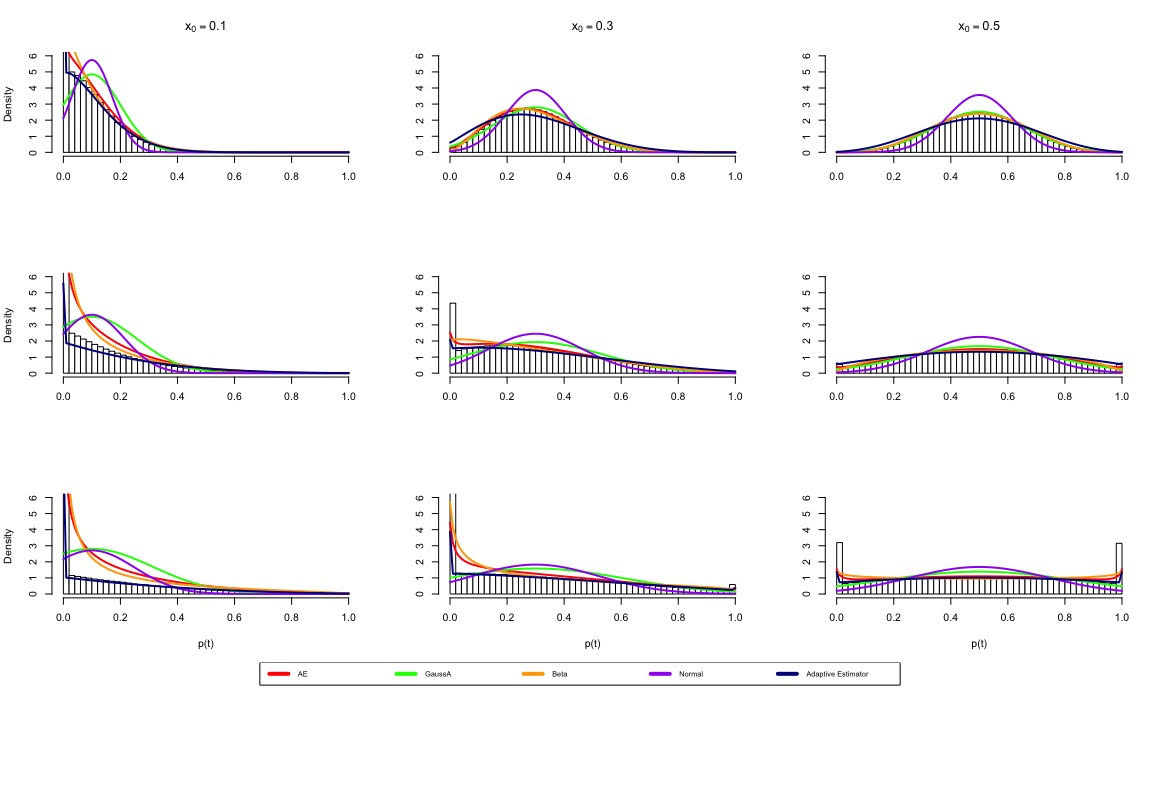}   
         

\caption{Comparison of the distributions proposed with empirical histogram, for a particular generation and different initial values of $x_{0}$.}
\label{comparisons}
\end{figure}

Figures \ref{img: heatmap-hellinger} and \ref{heatmap-L2} (see Appendix) show the comparisons between Hellinger distance and $L^2$ norm, calculated between the density estimate for the allele frequency, obtained through the adaptive method and the proposed densities: Gaussian approximation (GaussA) and Asymptotic expansion (AE); and the most commonly used densities: Beta and Gaussian distribution. 

\begin{figure}[H]
\centering
\includegraphics[scale = 0.65]{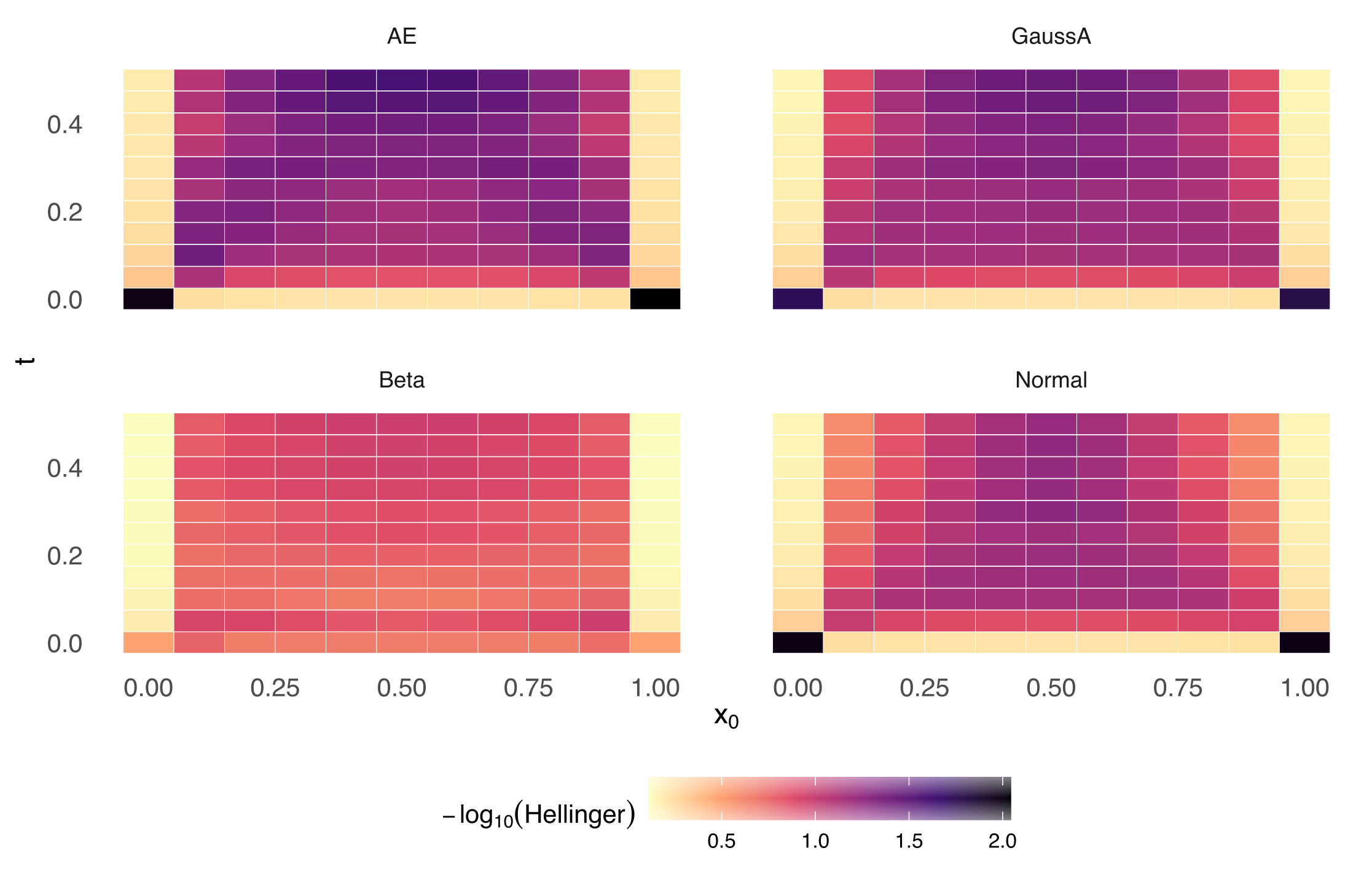}
\caption{\footnotesize Heatmaps of $-\log_{10}(H(p,q))$, with $p(x)$ being the ADE and $q(x)$ being the densities defined in section \ref{section: comparison densities}.}
\label{img: heatmap-hellinger}
\end{figure}

The values of the Hellinger distance and $L^2$ norm are smaller than Gaussian and Beta for both, Asymptotic expansion and Gaussian approximation. This relationship is illustrated by the color scale, where darker colors correspond to lower values.. Moreover, it is possible to notice that the commonly used distributions (Beta and Gaussian distributions) do not capture good fits for values close to 0 or 1 when the value of $t/2N$ is large. Whereas the distributions proposed in our work fit better to this feature of the allele frequency distribution. However, for values of $t/2N$ between 0.1 and 0.9, approximately, all the studied distributions fit well. That is when not much concentration is observed at values close to 0 or 1.

\subsection{Discussion} \label{section: discussion}
In this work we derive two approximated distributions for the allele frequency of a W-F process of a fixed time $t=\frac{n}{2N}$, where $N$ is the constant population size and $n$ the number of generations. 
After evaluating the approximations through simulations, we conclude that the AE captures better the accumulation of allele frequencies closer to 0 and 1, that can be see in Figure \ref{img: heatmap-hellinger}.

The Hellinger distance between the density obtained by the ADE and the densities studied (AE, GaussA, Beta, and Normal) shows that, in general, the densities fit well for the different values of $t$ and $x_0$, except for the generation $t=0.001$ and for values of $x_0$ too close from zero and one. The distances of AE and GaussA are clearly smaller  than the distance to the Beta approximation, but also than to the Normal density. None of the distributions can capture properly the fixation at 0 or 1.

The DAF estimator, obtained through the adaptive method for Beta Kernels, allowed us to have a continuous density of a process that, in essence, is discrete and allowed us to compare against distributions that are continuous. This method is based on a trade-off between bias and variance, thus controlling the bandwidth-problem, usual in density estimation problems. As it is possible to notice in Figure \ref{comparisons}, the histogram (DAF) is not exactly the same as the density given by the adaptive method (blue line), particularly for the case when $x_0=0.5$; this is because when $x_0=0.5$, we are in the case with the highest variance of the DAF. The Hellinger distance presented in the work of \cite{Tataru2015}, was calculated in its discrete version, i.e., discretizing densities that are continuous by definition; while in our work, we calculated the continuous version of it. This leads to different results, particularly close to 0 and 1. Here, we wanted to use the continuous versions, as we considered that discretizing the densities could hide the bad fit for values close to 0 and 1.

The analytical expresion for the transition density, \eqref{eq: density of x 1}, provides the opportunity to set up problems related to parametric inference within the W-F model. These include estimating the initial allele frequency, $\hat{x}_0$, which is used to infer demographic history. However, to carry out this procedure, it is necessary to have multiple populations and, therefore, the respective joint density. Our work, however, only considers one population, so the extension to multiple populations through the explored method represents a new problem. Authors such as \cite{gutenkunst2009} propose to approach the problem by calculating approximations, involving only up to three simultaneous populations, but without having an exact analytical formula that allows a deeper study of the properties of the estimated parameters.





\subsection*{Acknowledgement}
T.Roa thanks to Proyecto FONDECYT Postdoctorado 3220043, 19-MATH-06 SARC,  FONDECYT 1171335 projects and Project Code 14ENI2-26865. T. Roa and R. León thanks to CONICYT-ANID: MATHAMSUD FANTASTIC 20-MATH-05 project. 



\newpage

\newpage
\section{Appendix} \label{apend}

\subsection{Additional Figures}

\begin{figure}[H]
\centering
    \includegraphics[scale=0.65]{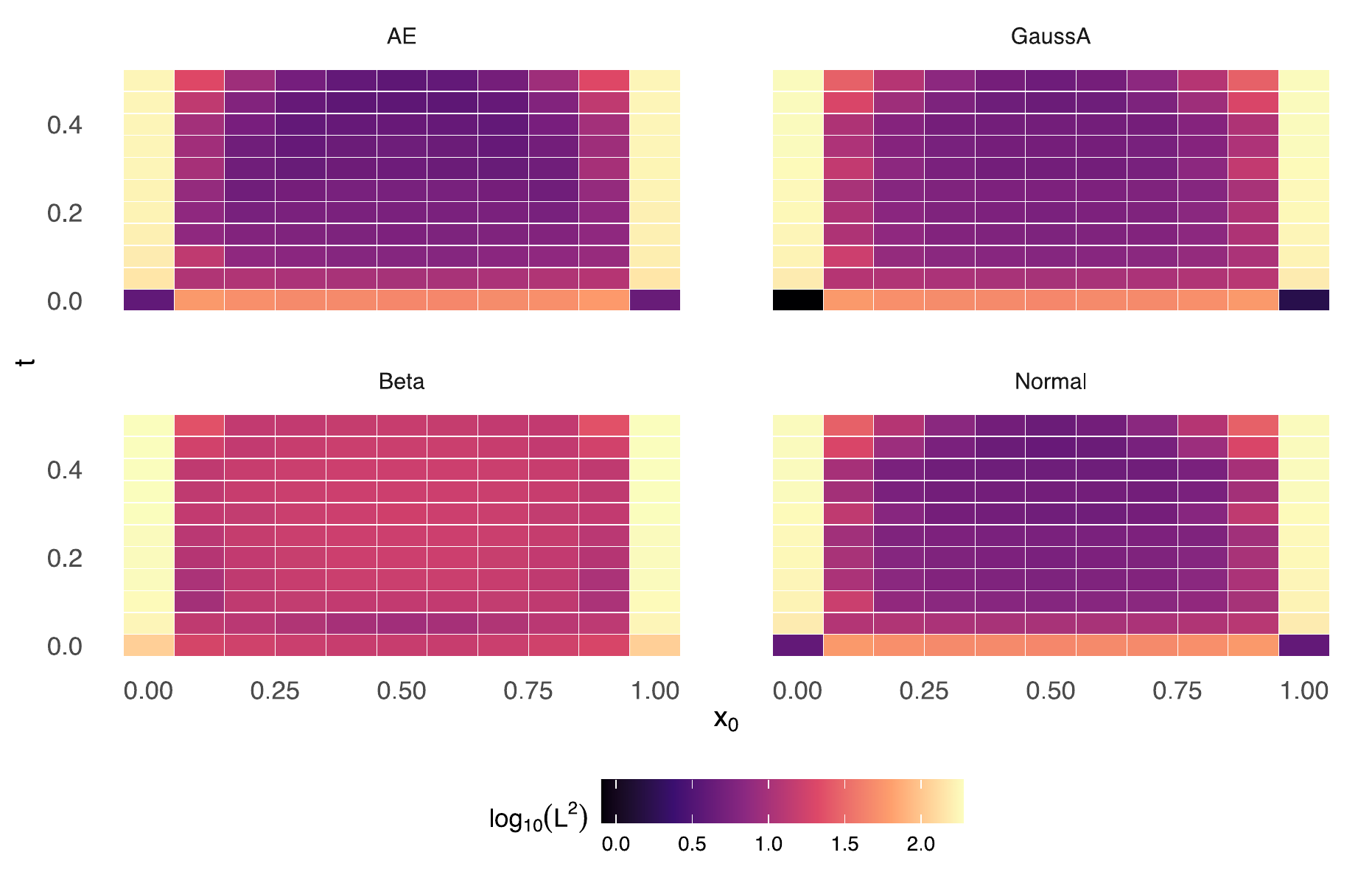}
\caption{Heatmaps of $\log_{10}(L^2)$ for Asymptotic expansion (AE), Gaussian approximation (GaussA), Beta and Normal distribution.}
\label{heatmap-L2}
\end{figure}

\subsection{Mutation} \label{OSC1}
In this section we will prove the theorem \ref{theo: density W-F mutation} for the case where there is no selection. This model has the local variance term equal to $\sigma(x)=\sqrt{x(1-x)}$ and the drift term is equal to $\mu(x)=-\beta_1 x+\beta_2(1-x),$ where $\beta_1$ and $\beta_2$ are the scaled mutation rates.

Again, we will make use of the transformed process $Y(t)=F(X(t))$, with $F$ defined in (\ref{eq: definition of g}). Using Itô's formula, we get
\begin{equation*}
    dY(t)=dW(t)-\Big(\frac{1}{4}\frac{1-2X(t)}{\sqrt{(X(t)(1-X(t))}}+\frac{(\beta_1 X(t)-\beta_2(1-X(t))}{\sqrt{(X(t)(1-X(t))}}\Big)dt,
\end{equation*}
that can be written using the explicit formula for $F^{-1}$ as
\begin{equation*}
    dY(t)=dW(t)+\left [\left (\frac{1}{4}-\beta_1 \right )\tan \left (\frac{Y(t)}{2} \right )-\left (\frac{1}{4}-\beta_2 \right )\cot \left (\frac{Y(t)}{2} \right)\right ] dt.
\end{equation*}

By defining the new function drift as
\begin{equation*}
    \tilde \mu(y)=\left (\frac{1}{4}-\beta_1 \right)\tan \left (\frac{y}{2} \right)- \left (\frac{1}{4}-\beta_2 \right)\cot \left (\frac{y}{2} \right).
\end{equation*}
With the same notation of Section \ref{section: AE} we get as in (\ref{eq: AE1})
\begin{equation}\label{primera1}
p^Y_t(y_0,y)=q_t(y-y_0)\exp \left (\int_{y_0}^y \tilde{\mu}(u)du \right )(1+O(t)).
\end{equation}
In this form, we obtain
\begin{IEEEeqnarray*}{rCl}
    M(y)-M(y_0) & = &\int_{y_0}^{y} \left[ \left(\frac{1}{4}-\beta_1 \right) \tan \left(\frac{u}{2}\right)- \left(\frac{1}{4}-\beta_2\right) \cot \left(\frac{u}{2} \right)\right]du \\
    & = & 2\int_{\frac{y_0}2}^{\frac y2} \left[ \left(\frac{1}{4}-\beta_1 \right) \tan(u)- \left(\frac{1}{4}-\beta_2 \right)\cot(u) \right]du \\
    & = &2\Big(- \left(\frac{1}{4}-\beta_1 \right)\log|\cos u|- \left(\frac{1}{4}-\beta_2 \right)\log|\sin u|\Big)\Big|_{\frac{y_0}2}^{\frac y2} \\
    & = & -\log\Big(|\cos u|^{2(\frac{1}{4}-\beta_1)}|\sin u|^{2(\frac{1}{4}-\beta_2)}\Big)\Big|_{\frac{y_0}2}^{\frac y2} \\
    & = &-\log\Big( \left(\cos^2 \left(\frac{y}2 \right) \right)^{(\frac{1}{4}-\beta_1)} \left(\sin^2 \left(\frac{y}2\right) \right)^{(\frac{1}{4}-\beta_2)}\Big) \\
    & &+\log\Big( \left(\cos^2 \left(\frac{y_0}2\right) \right)^{(\frac{1}{4}-\beta_1)} \left(\sin^2 \left(\frac{y_0}2 \right) \right)^{(\frac{1}{4}-\beta_2)}\Big).
\end{IEEEeqnarray*}
Thus taking exponential we have
\begin{equation*}
    \exp \left [M(F(x))-M(F(x_0)) \right ]= \left (\frac{1-x_0}{1-x} \right )^{(\frac{1}{4}-\beta_1)} \left (\frac{x_0}{x} \right )^{(\frac{1}{4}-\beta_2)}.
\end{equation*}

As we have done to obtain the equation \eqref{eq: AE1}, it results that the following asymptotic expansion holds 
\begin{equation*}
    p^X_{t}(x_0,x)=\frac1{\sqrt{2\pi t}} \frac{(1-x_0)^{(\frac{1}{4}-\beta_1)}x_0^{(\frac{1}{4}-\beta_2)}}{(1-x)^{\frac{3}{4}-\beta_1}x^{\frac{3}{4}-\beta_2}}e^{-\frac{1}{2t}(\int_{x_0}^{x}\sigma^{-1}(u)du)^2}(1+O(t)).
\end{equation*}

\subsection{Selection} \label{OSC2}
In this section, we consider the model including selection but no mutation.  In such a case, we keep the same diffusion coefficient from \eqref{eq: WF-selection}, but now the drift is defined as in \citet{Ewens2004}, Chapter 5
\begin{equation*}
    h(x)=\alpha x(1-x)\Big(x+h(1-2x)\Big).
\end{equation*}
By (\ref{eq: transformed process 1}), we get
\begin{IEEEeqnarray*}{rCl}
    dY(t)&=&dW(t)+\Big(\frac{(4h-1)}{4}\frac{1-2X(t)}{\sqrt{X(t)(1-X(t)}}+\alpha\sqrt{\frac{X(t)}{(1-X(t))}}\Big)dt\\
    &=&dW(t)+\Big(\frac{(4h-1)}{2}\cot (Y(t))+\alpha \tan \left(\frac{Y(t)}2\right)\Big)dt.
\end{IEEEeqnarray*}
Thus
\begin{equation*}
    M(y)-M(y_0)=\frac{(4h-1)}{2}\int_{y_0}^{y}\cot u\,du+\alpha\int_{y_0}^y\tan\frac u2\,du.
\end{equation*}

Proceeding as in the previous section, we get
\begin{equation*}
    M(F(x))-M(F(x_0))=\ln \left( \frac{x_0(1-x_0)}{x(1-x)} \right)^{(\frac14-h)}+\ln \left(\frac{1-x_0}{1-x} \right)^{\alpha}.
\end{equation*}
And
\begin{equation*}
    e^{(M(F(x))-M(F(x_0)))}= \left( \frac{x_0(1-x_0)}{x(1-x)} \right)^{(\frac14-h)} \left(\frac{1-x_0}{1-x} \right)^{\alpha}.
\end{equation*}

The rest  is plain.

\end{document}